\let\oldcaption\caption
\renewcommand{\caption}[1]{\oldcaption{{\fontsize{10}{12}\selectfont #1}}}
\newcommand{\hide}[1]{}
\newcommand{\stack}{\mbox{\sf StackMST}}
\newcommand{\np}{\mbox{\sf NP}}
\newcommand{\apx}{\mbox{\sf APX}}
\newcommand{\p}{\mbox{\sf P}}
\newcommand{\fptas}{\mbox{\sf FPTAS}}
\newcommand{\comp}{\textup{\texttt{comp}}}
\newcommand{\cycle}{\textup{\texttt{cycle}}}
\title{Specializations and Generalizations of the \\Stackelberg Minimum Spanning Tree Game\thanks{A preliminary version of this paper was published in the \emph{Proceedings of the 6th International Workshop on Internet and Network Economics (WINE 2010)}, and appeared in Vol. 6484 of Lecture Notes in Computer Science, Springer, 75--86. DOI: http://dx.doi.org/10.1007/978-3-642-17572-5\_7 -- This work is partially supported by the Research Grant PRIN 2010 ``ARS    TechnoMedia'' (Algorithms for Techno-Mediated Social Networks), funded by the Italian Ministry of Education, University, and Research.}}
\author{Davide Bilò\inst{1} \and Luciano Gualà\inst{2} \and Stefano Leucci\inst{3} \and Guido Proietti\inst{3,4}}
\institute{
Dipartimento di Scienze Umanistiche e Sociali, University of Sassari, Italy. \and
Dipartimento di Ingegneria dell'Impresa, University of Rome ``Tor Vergata'', Italy. \and
Dipartimento di Ingegneria e Scienze dell'Informazione e Matematica, University of L'Aquila, Italy. \and
Istituto di Analisi dei Sistemi ed Informatica, CNR, Rome, Italy. \\
\email{davidebilo@uniss.it, guala@mat.uniroma2.it, stefano.leucci@graduate.univaq.it, guido.proietti@univaq.it}
}
\date{}
\begin{document}
\maketitle

\begin{abstract}
{\fontsize{10}{12} \selectfont
Let be given a graph $G=(V,E)$ whose edge set is partitioned
into a set $R$ of \emph{red} edges and a set $B$ of
\emph{blue} edges, and assume that red edges are weighted and form a spanning tree of $G$.
Then, the \emph{Stackelberg Minimum Spanning Tree} (\stack) problem is that of pricing (i.e., weighting) the blue edges in such a way that the total weight of the blue edges selected in a minimum spanning tree of the resulting graph is maximized.
\stack \ is known to be \apx-hard already when the number of
distinct red weights is 2. In this paper we analyze some meaningful
specializations and generalizations of \stack, which shed some
more light on the computational complexity of the problem. More
precisely, we first show that if $G$ is restricted to be \emph{complete}, then the
following holds: (i) if there are only 2 distinct red weights, then
the problem can be solved optimally (this contrasts with the
corresponding \apx-hardness of the general problem); (ii)
otherwise, the problem can be approximated within $7/4 +
\epsilon$, for any $\epsilon > 0$. Afterwards, we define a natural
extension of \stack, namely that in which blue edges have a
non-negative \emph{activation cost} associated, and
it is given a global \emph{activation budget} that must not be exceeded when pricing blue edges. Here,
after showing that the very same approximation ratio as that of
the original problem can be achieved, we prove that if the spanning tree of red edges can be rooted so as that any root-leaf path contains at most $h$ edges,
then the problem admits a $(2h+\epsilon)$-approximation algorithm, for any $\epsilon > 0$.

\bigskip

{\bf Keywords:} Communication Networks, Minimum Spanning Tree, Stackelberg Games, Network
    Pricing Games.
}
\end{abstract}

\setcounter{footnote}{0}

\section{Introduction}
Leader-follower games, which were introduced by von Stackelberg in the far 1934~\cite{Sta34}, have recently received a considerable attention from the computer science community. This is mainly due to the fact that the Internet is a vast electronic market composed of millions of independent end-users (i.e., the followers), whose actions are by the way influenced by a limited number of owners of physical/logical portions of the network (i.e., the leaders), that can set the price for using their own network links. In particular, in a scenario in which the leaders know in advance that the followers will allocate a communication subnetwork enjoying some criteria, a natural arising problem is that of analyzing how the leaders can optimize their pricing strategy. Games of this latter type are widely known as \emph{Stackelberg Network Pricing Games} (SNPGs).

When only 2 players (i.e., a leader and a follower) are involved, a SNPG can be formalized as follows: We are given a graph $G=(V,E)$, whose edge set is partitioned
into a set $R$ of \emph{red} edges and a set $B$ of
\emph{blue} edges, and an edge cost function $c:R \rightarrow \mathbb{R}^+$ for red edges only, while blue edges need
instead to be priced by the leader. In the following, we assume that $n=|V|$ and $m=|R|+|B|$. Then, the leader moves first and chooses a
pricing function $p :B \rightarrow \mathbb{R}^+$ for her\footnote{Throughout the paper, we adopt the
convention of referring to the leader and to the follower with female and male pronouns, respectively.} edges, in an attempt to \emph{maximize} her
objective function $f_1(p,H(p))$, where $H(p)$ denotes the decision which will be taken by the follower,
consisting in the choice of a subgraph of $G$. This notation stresses the fact that the leader's problem is
implicit in the follower's decision. Once observed the leader's choice, the follower reacts by selecting a
subgraph $H(p)=(V',E')$ of $G$ which \emph{minimizes} his objective function $f_2(p,H)$, parameterized in $p$.
Note that the leader's strategy affects both the follower's objective function and the set of feasible
decisions, while the follower's choice only affects the leader's objective function. Quite
naturally, we assume that $f_1$ is \emph{price-additive}, i.e., $f_1(p,H(p))=\sum_{e \in B \cap E'} p(e)$. This
means, the leader decides edge prices having in mind that her revenue equals the overall price of her selected
edges. Therefore, the 2-player game can be equivalently thought (as we will do in the rest of the paper) as a \emph{bilevel} optimization problem in which an optimal value of $f_1$ has to be computed.

\paragraph{\it Previous work.} The most immediate SNPG is that in which we are given two
specified nodes in $G$, say $s,t$, and the follower wants to
travel along a \emph{shortest path} in $G$ between $s$ and $t$
(see \cite{VH06} for a survey). This problem has been shown to be
\apx-hard~\cite{Jor09}, as well as not approximable within a factor of $2-o(1)$ unless $\p=\np$~\cite{BK09}, while an
$O(\log |B|)$-approximation algorithm is provided in~\cite{RSM05}. For the case of multiple followers (each with a
specific source-destination pair), Labbé \emph{et al.}
\cite{LMS01} derived a bilevel LP formulation of the problem (and
proved \np-hardness), while Grigoriev \emph{et al.} \cite{Gri05}
presented algorithms for a restricted shortest path problem on
parallel edges. Furthermore, when all the followers share the same
source node, and each node in $G$ is a destination of a single
follower, then the problem is known as the \emph{Stackelberg
single-source shortest paths tree} game. In this game, the
leader's revenue for each selected edge is given by its price
multiplied by the number of paths -- emanating from the source --
it  belongs to, and in \cite{BGPW08} it was proved that finding an
optimal pricing for the leader's edges is \np-hard, as soon as
$|B|=\Theta(n)$.

Another basic SNPG, which is of interest for this paper, is that
in which the follower wants to use a \emph{minimum spanning tree}
(MST) of $G$. For this game, known
as \emph{Stackelberg MST} (\stack) game, in \cite{Car07} the
authors proved the \apx-hardness already when the number of red
edge costs is 2, and gave a $\min\{k, 1 + \ln \beta, 1 + \ln
\rho\}$-approximation algorithm, where $k$ is the number of
distinct red costs, $\beta$ is the number of blue edges selected
by the follower in an optimal pricing, and $\rho$ is the maximum
ratio between red costs. In a further paper \cite{Car13}, the
authors proved that the problem remains \np-hard even if $G$ is
planar, while it can be solved in polynomial time once that $G$
has bounded treewidth. We point out that a structural property about \stack, which will also hold for our generalized version we are going to present, is that the hardness in finding an optimal solution lies in the selection of the optimal set of blue edges that will be purchased by the follower, since once that a set of blue edges is part of the final MST, then their best possible pricing can be computed in polynomial time, as shown in~\cite{Car07}.

Notice that all the above examples fall within the class of SNPGs
handled by the general model proposed in \cite{BHK12},
encompassing all the cases where each follower aims at optimizing a
polynomial-time network optimization problem in which the cost of
the network is given by the sum of prices and costs of contained
edges. Nevertheless, SNPGs for models other than this one have been studied in~\cite{BGP09,Briestetal12}.

\paragraph{\it Our results.} In this paper we analyze some meaningful specializations
 and generalizations of \stack, which shed some more light on the computational complexity of the game.
 For the sake of presenting our results in a unifying framework, we start by defining the
 aforementioned generalized version of \stack. First of all, notice that given any instance of \stack, this
 can be simplified into an equivalent instance in which we compute a red MST of $G$, and then we discard all the
 red edges not belonging to it (see also \cite{Car07}). Then, the \emph{budgeted} \stack\ game is a 2-player game defined as follows.
 We are given a tree $T=(V,E(T))$ of $n$ nodes where each (red) edge $e \in E(T)$ has a fixed non-negative cost $c(e)$.
Moreover, we are given a non-negative \emph{activation cost} $\gamma(e)$ for
each (blue) edge $e=(u,v) \notin E(T)$, and a budget $\Delta$. The game,
denoted by \stack$(\gamma,\Delta)$, consists of two phases. In the
first phase the leader selects a set $F$ of edges to add to
$T$ such that the budget is not exceeded, i.e., $\sum_{e \in F}
\gamma(e) \leq \Delta$, and then prices them with a price function
$p:F \rightarrow \mathbb{R}^+$ having in ming that, in the second phase, the
follower will take the weighted graph $G=(V,E(T) \cup F)$
resulting from the first phase, and will compute a MST $M(F,p)$ of
$G$. Then, the
leader will collect a revenue of $r(M(F,p))= \sum_{e \in F \cap
M(F,p)} p(e)$. Our goal is to find a strategy for the leader which
maximizes her revenue.\footnote{Throughout the paper, as usual we assume that when multiple
optimal solutions are available for the follower, then he selects
an optimal solution maximizing the leader's revenue.}
Notice that using this more general definition, the original \stack\ game can be rephrased as a \stack$(\gamma,\Delta)$ game in which $T$ is any red MST of $G$, $\Delta$ is equal to 0, and the activation cost for an edge not in $E(T)$ is equal to 0 if it belongs to $B$, otherwise it is equal to any positive value.

In this paper, we prove the following results:

\begin{enumerate}
\item \stack$(0,0)$ with only 2 distinct red costs can be solved
optimally, where the first 0 in the argument is used to denote the
fact that $\gamma$ is identically equal to 0; in other words, this is a special case of \stack\  with only two red edge costs in which the input graph is complete;

\item \stack$(0,0)$ can be approximated within $3/2 + \epsilon$,
for any $\epsilon > 0$ when the red edges form a path;

\item \stack$(0,0)$ can be approximated within $7/4 + \epsilon$,
for any $\epsilon > 0$, in general;

\item \stack$(\gamma,\Delta)$ admits a $\min\{k, 1 + \ln \beta, 1
+ \ln \rho, 2h+\epsilon\}$-approximation algorithm, for any
$\epsilon>0$, where $k$, $\beta$ and $\rho$ are as previously
defined for \stack, and $h$ denotes the \emph{radius} of $T$ w.r.t. the number of edges, once $T$ is rooted at its center.
\end{enumerate}

We point out that all the above problems have an
application counterpart, since the \stack$(0,0)$ class of problems
models the case in which the leader retains the potentiality to
activate (at no cost) any missing connection in the network, while clearly
result (4) complements the approximation ratio given in~\cite{Car07} whenever the radius of the red tree is bounded, which
might well happen in practice. Finally, notice also that
\stack$(0,0)$ is a specialization of the general \stack, for which
however we were not able to prove whether the problem is in $\p$ or not.
Therefore, this remains a challenging open problem.

The rest of the paper is organized by providing each of the above
results in a corresponding section, followed by a concluding section listing some interesting problems left open.


\section{Exact algorithm for \stack$(0,0)$ with  costs in $\{a,b\}$}
In this section we present an exact polynomial-time algorithm for
\stack$(0,0)$ when the cost of any red edge belongs to the set $\{a,b\}$, with $0 \le a < b$.
Notice that this case is
already \apx-hard for \stack{} \cite{Car07}. For the sake of clarity,
we will first present the algorithm and the analysis when the red tree
is actually a path. The extension to the general case will be derived in the
subsequent subsection.

\subsection{Solving \stack$(0,0)$ with two red costs when $T$ is a path}

Now, we present an exact algorithm for
\stack$(0,0)$ on a red path $P$ with costs in $\{a,b\}$, with $0 \le a < b$.
We call a subpath $P'$ of $P$ an \emph{$a$-block} if
$P'$ has all edges of cost $a$, and $P'$ is maximal (w.r.t.
inclusion). We say that an $a$-block is \emph{good} if its length
is greater than or equal to 3, \emph{bad} otherwise. Let $\sigma$ be
the number of bad blocks of $P$.

We first present an algorithm achieving a revenue of $c(P)- \min \Big\{\sigma a,  \left\lfloor
\frac{\sigma}{2} \right\rfloor (b-a) +  \left(\sigma-2 \left\lfloor
\frac{\sigma}{2}\right \rfloor \right) \min\{a,b-a\} \Big\}$, where $c(P)$ denotes the sum of costs of edges of $P$, and then we show that such a revenue is actually an upper bound to the optimal revenue.

For technical convenience, we only consider instances where $P$ has at least $5$ edges. Clearly, the solutions for the remaining instances can be easily computed. The algorithm uses the following four rules. Each rule considers a subpath of $P$ and specifies a feasible solution
for the subpath, i.e., a set of blue edges incident to the vertices of the subpath with
a corresponding pricing. The solutions corresponding to the rules
are shown in Figure \ref{fig:a-b-path_rules}.

\begin{description}
\item[Rule 1:] Let $P'$ be a subpath of $P$ containing only one $a$-block,
and this $a$-block is good. We can obtain revenue $c(P')$ from
$P'$ by adding blue edges only within $P'$.

\item[Rule 2:] Let $P'$ be a subpath of $P$ containing only one $a$-block
and this $a$-block is bad. We can compute a solution with revenue
$c(P')-a$ from $P'$.

\item[Rule 3:] Let $P'$ be a subpath of $P$ containing one $a$-block,
and this $a$-block is the last bad block of $P$. Moreover $P'$ has at least one more edge of cost $b$ that either precedes or follows the $a$-block.
We can obtain a revenue of $c(P') - (b-a)$ from $P'$ by using a star of blue edges centered at the left or right endvertex of $P$, depending on the position of the edge of cost $b$. Notice that the endvertices of $P'$ might be followed by other good blocks.

\item[Rule 4:] Let $P_1, P_2$ be two edge-disjoint subpaths of $P$ each
containing only one $a$-block. Assume that both $a$-blocks are bad
and $P_1$ contains an edge of cost $b$ whose removal separates the
two $a$-blocks. We can obtain a revenue of $c(P_1)+c(P_2) - (b-a)$
from $P_1$ and $P_2$. Notice that $P_1$ and $P_2$ do not need to be adjacent.
\end{description}

\begin{figure}[t]
\begin{center}
\includegraphics{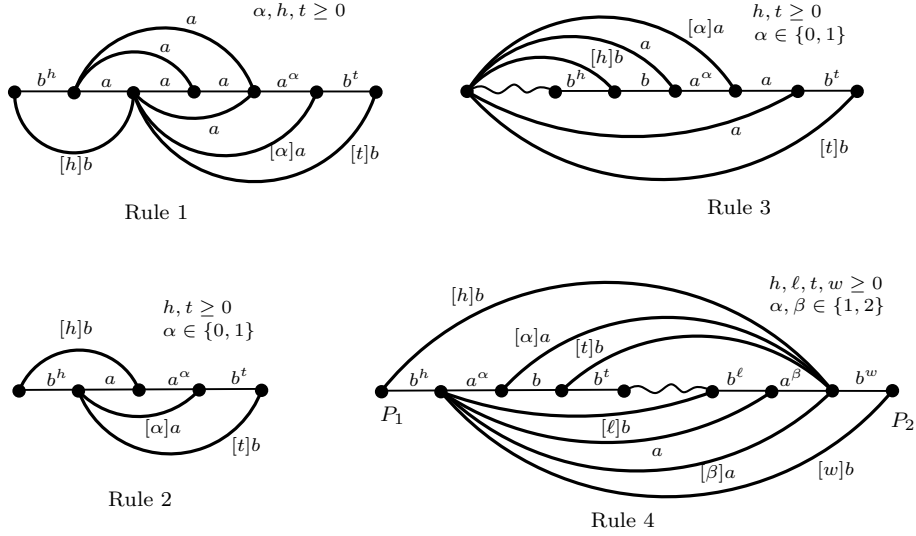}
\end{center}
\caption{Rules used by the algorithm to solve subpaths. We denote
by $\eta^\delta$ a path of $\delta$ edges each having a cost of $\eta$. An
edge with label $[i]\eta$ represents $i$ blue edges each having a price of
$\eta$. Observe that, except for Rule 2, all red (path) edges with
cost $a$ will be discarded by the follower. Concerning Rule 2, the follower
will select only a single red edge of cost $a$ (of the shown subpath). The left endvertex of the path of Rule 3 corresponds to one endvertex of $P$.}
\label{fig:a-b-path_rules}
\end{figure}

Our algorithm is as follows. If $b \ge 3a$ then we split $P$ into
subpaths each of them containing exactly one $a$-block. Then we
apply Rule 1 or Rule 2 to each subpath, depending on whether the $a$-block
in the subpath is good or bad. Hence, this solution yields a revenue of
$c(P)-\sigma a$.

Now, consider the case $b<3a$. Let $B_1, \dots, B_{\sigma}$ be the bad
$a$-blocks contained in $P$ from left to right w.r.t one of the endvertices.
We first consider the case where $\sigma \ge 2$, i.e., there are at least two bad blocks.
The algorithm splits $P$ into subpaths such that (i) each subpath contains
exactly one $a$-block, (ii) for every $i=0,\dots, \lfloor \sigma/2
\rfloor -1$, subpath containing $B_{2i+1}$ has an edge of cost $b$
incident to its right endvertex, and (iii) if $\sigma$ is odd, the
subpath containing $B_\sigma$ has an edge of cost $b$ incident to its
left endvertex.\footnote{Property (iii) can always be guaranteed since $\sigma \ge 2$.}
Let $P_i$ be the subpath containing $B_i$. The algorithm uses Rule 4 for every pairs of subpaths $P_{2i+1},P_{2i+2}$, $i=0,\dots,
\lfloor \sigma/2 \rfloor -1$, and Rule 1 for every subpath containing a good
$a$-block. Finally, if $\sigma$ is odd, we apply Rule 3 for $P_\sigma$ when
$b \le 2a$ (we can apply Rule 3 since property (iii) above holds), while we use Rule 2 when $b>2a$. It is easy to see
that the revenue of this solution coincides with $c(P)- \min \Big\{\sigma a,  \left\lfloor
\frac{\sigma}{2} \right\rfloor (b-a) + \left(\sigma-2 \left\lfloor
\frac{\sigma}{2}\right \rfloor \right) \min\{a,b-a\} \Big\}$.

Concerning the case $\sigma \le 1$, then either $P$ has no bad blocks, and then a revenue of $c(P)$ can be obtained, or there exists only one bad block $B_1$.
In this latter case:

\begin{itemize}
\item if $b \ge 2a$, let $P'$ be any subpath containing $B_1$; then, solve $P'$ using Rule 2.
\item Otherwise, if $b < 2a$, let $P'$ be a subpath containing $B_1$ and a suitable additional edge of cost $b$; then, use Rule 3 on $P'$.
\end{itemize}

\noindent
Finally, split $P \setminus P'$ into subpaths, each contaning one good $a$-block, and solve them using Rule 1.
By doing so we obtain a revenue of $c(P) - \min\{a, b-a\}$.

Now, we show that the revenue computed by the above algorithm is the optimal revenue $r^*$:

\begin{lemma}\label{thm:UB for a-b path}
$r^* \le c(P)- \min \left\{\sigma a,  \left\lfloor
\frac{\sigma}{2} \right\rfloor (b-a) +  \left(\sigma-2 \left\lfloor
\frac{\sigma}{2}\right \rfloor \right) \min\{a,b-a\} \right \}$.
\end{lemma}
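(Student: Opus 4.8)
The plan is to fix an optimal leader strategy, let $M$ be the MST the follower selects, and split the red edges of $P$ into the \emph{kept} set $K = M \cap E(P)$ and the \emph{deleted} set $D = E(P) \setminus M$; write $Q = M \setminus E(P)$ for the blue edges actually bought. Since $M$ and $P$ are both spanning trees of $G$, the symmetric exchange property of the graphic matroid provides a bijection $\phi \colon Q \to D$ for which $M - f + \phi(f)$ is a spanning tree for every $f \in Q$, and minimality of $M$ then forces $p(f) \le c(\phi(f))$. Because $r^* = \sum_{f \in Q} p(f)$ and $c(P) = c(K) + \sum_{f \in Q} c(\phi(f))$, this gives the exact decomposition
\[
c(P) - r^* = c(K) + \sum_{f \in Q}\bigl(c(\phi(f)) - p(f)\bigr),
\]
in which every term is non-negative. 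It therefore suffices to bound from below this \emph{loss}, namely the cost of the kept red edges plus the total pricing deficit incurred on the bought blue edges.

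The second and central step exploits completeness of $G$: the fundamental cycle of a blue edge $f = (v_i, v_j)$ is exactly the subpath $P[v_i, v_j]$, which always contains at least two red edges (the two endpoints of a single red edge are adjacent and carry no blue edge). Hence a cost-$a$ edge sitting inside a bad block cannot be the maximum on any blue fundamental cycle unless that cycle stays within an $a$-block of length at least $3$ -- impossible for a bad block. The key local claim I would then establish is: for every bad block $B$, either (i) at least one cost-$a$ edge of $B$ lies in $K$, contributing at least $a$ to the loss, or (ii) a cost-$b$ edge flanking $B$ is either kept or captured at a deficit $c(\phi(f)) - p(f) \ge b-a$, contributing at least $b-a$ to the loss; moreover a single cost-$b$ edge can be charged in this way by at most the (at most two) bad blocks immediately adjacent to it. Proving this claim rigorously is the main obstacle, since one must show that deleting \emph{all} cost-$a$ edges of a bad block genuinely forces the price on a flanking cost-$b$ edge down to at most $a$, while correctly treating length-$2$ blocks, the situation where a single cost-$b$ edge separates two bad blocks (which is precisely why the sharing is capped at pairs), and the follower's tie-breaking in the leader's favour.

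Finally I would aggregate the local contributions. Let $k$ be the number of bad blocks of type (i); the remaining $\sigma - k$ blocks are of type (ii), and since each sacrificed cost-$b$ edge is shared by at most two of them they force at least $\lceil (\sigma - k)/2 \rceil$ distinct contributions of size at least $b - a$. Combining with the decomposition of the first step yields
\[
c(P) - r^* \ge k\,a + \Bigl\lceil \tfrac{\sigma - k}{2} \Bigr\rceil (b - a).
\]
Minimising the right-hand side over $k \in \{0, \dots, \sigma\}$ -- the only relevant thresholds being $b \ge 3a$, $2a \le b < 3a$ and $b < 2a$ -- one checks that the minimum equals $\min\bigl\{\sigma a,\ \lfloor \sigma/2 \rfloor (b-a) + (\sigma - 2\lfloor \sigma/2 \rfloor)\min\{a, b-a\}\bigr\}$, which is exactly the stated bound. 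This also dovetails with the algorithm: type-(i) blocks correspond to Rule 2, paired type-(ii) blocks to Rule 4, and a leftover type-(ii) block to Rule 3, so the extremal value of $k$ is precisely the choice the algorithm makes for the given relation between $a$ and $b$.
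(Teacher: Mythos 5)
Your opening step is sound: the Brualdi-type exchange bijection $\phi\colon Q\to D$ between bought blue edges and deleted red edges, with $p(f)\le c(\phi(f))$ by minimality of $M$, and the resulting identity $c(P)-r^*=c(K)+\sum_{f\in Q}\bigl(c(\phi(f))-p(f)\bigr)$ are correct, and so is your final minimization of $ka+\lceil(\sigma-k)/2\rceil(b-a)$ over $k$, which does reproduce the stated bound in the three regimes $b\ge 3a$, $2a\le b<3a$, $b<2a$. The gap is exactly where you flagged it: the central local claim is \emph{false} as stated, because the $b-a$ saving shared by two bad blocks need not be realized on a cost-$b$ edge flanking either of them. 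Concretely, take $P=v_0v_1v_2v_3v_4v_5$ with edge costs $a,b,b,b,a$ and $2a\le b<3a$, so $\sigma=2$ and the two bad blocks are the end edges. Let the leader activate $(v_0,v_4),(v_1,v_4),(v_0,v_5)$ at price $a$ and $(v_0,v_2),(v_0,v_3)$ at price $b$; one checks via the cycle property that these five blue edges form an MST, giving revenue $3a+2b=c(P)-(b-a)$, which is optimal in this regime. Here $K=\emptyset$, and since $\phi$ is a bijection onto $D$ the total deficit equals $c(P)-r^*=b-a$ for \emph{every} admissible choice of $\phi$, so at most one edge can ever carry a deficit of $b-a$. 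But no single $b$-edge flanks both bad blocks: $(v_1,v_2)$ flanks only the left block and $(v_3,v_4)$ only the right one. Hence, whichever bijection you pick, at least one bad block satisfies neither (i) nor (ii): none of its $a$-edges is kept, and its flanking $b$-edge is deleted and matched at deficit $0$. This is precisely what the paper's Rule 4 warns about (``$P_1$ and $P_2$ do not need to be adjacent''): two bad blocks arbitrarily far apart can share a single $b-a$ saving through long blue edges, so any charging scheme capped at the bad blocks \emph{immediately adjacent} to a $b$-edge cannot be completed.

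The paper's proof replaces your per-block dichotomy with a global, non-local argument: for each $a$-block it forms the union $T_i=\bigcup_{e\in E(B_i)}T^*(e)$ of the follower-tree paths over the block's edges, so that all edges of the resulting forest $H$ have weight at most $a$; two bad blocks ``share'' a saving exactly when their trees merge into one connected component of $H$, which can happen regardless of their distance along $P$ (in the example above, all the trees merge through the hub $v_0$). The bound then comes from counting nodes versus components of $H$, via $N\ge n_a+\sigma+\sigma'+\ell-\ell_1$ and $t\le\sigma'+\lfloor(\sigma-\ell)/2\rfloor$, which is the correct global substitute for your locality cap: the inequality on $t$ plays the role of ``each saving is shared by at most two bad blocks,'' but without tying the saving to any particular red edge. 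To salvage your framework you would have to let a bad block charge a deficit anywhere along the structure of $M$ connecting it to other blocks, and then justify the sharing cap at that level of generality --- at which point you would essentially be rebuilding the paper's component-counting argument.
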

\begin{proof}
Let $n_a$ be the
number of red edges of cost $a$. Let $T^*$ be the tree computed by
the follower w.r.t. an optimal solution. Moreover, let $B_1,
\dots, B_\sigma$ and $\hat{B}_1,\dots,\hat{B}_{\sigma'}$ be the bad and the
good blocks of $P$, respectively. We denote by $m_i$ and
$\hat{m}_j$ the number of edges of $B_i$ and $\hat{B}_j$,
respectively. Moreover, for an edge $e=(x,y)$, $T^*(e)$ will
denote the unique path in $T^*$ between $x$ and $y$ (observe that $T^*(e)$ may be the path containing only edge $e$). For each
$i=1,\dots,\sigma$ and $j=1,\dots,\sigma'$, consider the \emph{bad tree} $T_i=\bigcup_{e \in
E(B_i)} T^*(e)$,\footnote{Here the union symbol denotes the union of graphs.} and the \emph{good tree} $\hat{T}_j=\bigcup_{e \in E(\hat{B}_j)}
T^*(e)$.
Let ${\cal T}=\{T_1,\dots,T_\sigma\} \cup\{\hat T_1,\dots, \hat T_{\sigma'}\}$. 
Observe that for each $i,j$, we have: (i) $T_i$ and $\hat{T}_j$ are trees and
every edge has cost $a$, (ii) $V(B_i)\subseteq V(T_i)$ and
$V(\hat{B}_j)\subseteq V(\hat{T}_j)$, and (iii) $E(T^*) \cap
E(B_i)\neq \emptyset$, or $T_i$ contains at least $m_i+1$ edges.

Let us consider the following graph $H=(\bigcup_i V(T_i) \cup
\bigcup_j V(\hat{T}_j), \bigcup_i E(T_i) \cup \bigcup_j
E(\hat{T}_j))$, and let $N$ be the number of nodes of $H$. Clearly,
$H$ is a forest. Moreover, each connected component of $H$ is either a single tree of ${\cal T}$ or it consists of the union of at least two trees in ${\cal T}$. Let us consider the set $X$ of ``unmerged''  bad trees, i.e., $X=\{T_i \mid i=1,\dots,\sigma, V(T_i)\cap V(T)=\emptyset, \,\, \forall \,\, T \in {\cal T}\setminus\{T_i\} \}$. We define $\ell = |X|$. Observe that each tree in $X$ is in the set $\cal C$ of connected components of $H$. Let $t$ be the number of the remaining connected components of $H$, i.e., $|{\cal C}|=t+\ell$. As each bad tree not in $X$ has been merged with some other tree, we have $t \le \sigma' + \left \lfloor\frac{\sigma-\ell}{2}
\right \rfloor$.

In order to relate $t$ to the number $N$ of nodes of $H$, we define
$\ell_1=|\{ T_i \mid T_i \in X, E(T^*)\cap E(B_i)\neq \emptyset
\}|$. Notice that $\ell_1$ is a lower bound to the number of red
edges in $H$. We now give a lower bound to $N$. Since $H$ spans all $a$-blocks (which are pairwise vertex disjoint), and since property (iii) holds, we have that $N \ge n_a + \sigma + \sigma' + \ell - \ell_1$. Therefore, since $H$ has $N- \ell -t$ edges of cost $a$, and using $c(P)= n_a (a-b) + (n-1)b$, we have:
\begin{eqnarray*}
r^* &\le& \big(N- \ell - t\big)a - \ell_1 a + \Big(n-1-\big(N-\ell-t\big)\Big)b \\
    &= & (N-\ell-t)(a-b) - \ell_1 a + (n-1)b \\
    & \leq & (n_a+\sigma+\sigma'-\ell_1-t)(a-b) +(n-1)b-\ell_1 a\\
    &=& c(P)-\big(\sigma+\sigma'-\ell_1-t\big)(b-a)-\ell_1 a\\
    &\le& c(P)-\left((\sigma-\ell_1)-\left\lfloor
    \frac{\sigma-\ell}{2}\right\rfloor\right)(b-a)-\ell_1 a\\
    &\le& c(P)- \min \left\{\sigma a,  \left\lfloor
\frac{\sigma}{2} \right\rfloor (b-a) +  \left(\sigma-2 \left\lfloor
\frac{\sigma}{2}\right \rfloor \right) \min\{a,b-a\} \right \}.
\end{eqnarray*}
To see why the latter inequality holds, one can consider the different parity of $\sigma$ and $\ell$ for each of the following three cases: $b \ge 3a$, $2a \le b < 3a$, and $b<2a$. \qed
\end{proof}

Hence, from the above lemma, we
have:
\begin{theorem}
\stack$(0,0)$ can be solved in polynomial time when the red edges form a path and their costs are in $\{a,b\}$.
\end{theorem}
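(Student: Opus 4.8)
The plan is to obtain the theorem as an immediate corollary of the constructive algorithm presented above together with Lemma~\ref{thm:UB for a-b path}. First I would observe that, by construction, the algorithm outputs a feasible leader strategy whose revenue is exactly
\[
c(P)- \min \Big\{\sigma a, \left\lfloor \tfrac{\sigma}{2} \right\rfloor (b-a) + \big(\sigma-2 \left\lfloor \tfrac{\sigma}{2}\right \rfloor \big) \min\{a,b-a\} \Big\};
\]
this was verified above by distinguishing the cases $b \ge 3a$, $\sigma \ge 2$, and the parity of $\sigma$. I would then invoke the lemma, which caps the optimal revenue $r^*$ by the very same quantity.

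The core of the argument is the matching of these two bounds. Writing $r_{\mathrm{alg}}$ for the revenue of the algorithm's solution, feasibility gives $r_{\mathrm{alg}} \le r^*$, while Lemma~\ref{thm:UB for a-b path} gives $r^* \le r_{\mathrm{alg}}$, since its right-hand side equals $r_{\mathrm{alg}}$. Hence $r_{\mathrm{alg}} = r^*$, and the strategy returned by the algorithm is optimal.

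Next I would check that the procedure runs in polynomial time. Decomposing $P$ into maximal $a$-blocks, labelling each as good or bad, and counting $\sigma$ require a single linear scan of $P$. Splitting $P$ into subpaths and assigning Rules~1--4 to them -- including the placement of the cost-$b$ edges demanded by properties (ii) and (iii) -- is again linear-time bookkeeping, and each rule attaches only a constant number of blue edges per vertex of its subpath together with an explicit pricing. Thus the whole construction takes $O(n)$ time and yields an $O(n)$-size output. The finitely many instances with fewer than $5$ edges would be dispatched by direct inspection, as remarked earlier.

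I expect no genuine obstacle at this stage: the real work is already done in Lemma~\ref{thm:UB for a-b path}, which supplies the upper bound, while the complementary achievability is witnessed by the algorithm itself. The only things left are to confirm that the achievable revenue and the upper bound coincide and that the construction is efficient, both of which are routine once the lemma is granted.
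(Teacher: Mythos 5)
Your proposal is correct and follows exactly the paper's argument: the paper likewise derives the theorem immediately by combining the constructive algorithm (whose revenue is shown case-by-case to equal $c(P)- \min \{\sigma a, \lfloor \sigma/2 \rfloor (b-a) + (\sigma-2 \lfloor \sigma/2 \rfloor) \min\{a,b-a\} \}$) with the upper bound of Lemma~\ref{thm:UB for a-b path}, so optimality and polynomial running time follow as you describe. Your explicit sandwich argument $r_{\mathrm{alg}} \le r^* \le r_{\mathrm{alg}}$ and the linear-time bookkeeping remarks merely spell out what the paper leaves implicit.
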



\subsection{Solving \stack$(0,0)$ with two red costs: the general case}

We now extend the previous result by providing an optimal polynomial-time algorithm for
\stack$(0,0)$ when the red costs belong to the set $\{a,b\}$, and $T$ is a tree.

Let $0 \le a < b$, and let $T$ be a red tree with costs in
$\{a,b\}$. In a similar fashion as before, we call a subtree
$T'$ of $T$ an \emph{$a$-block} if
$T'$ has all edges of cost $a$, and $T'$ is maximal (w.r.t.
inclusion). We say that an $a$-block is \emph{bad} if it is a
star, \emph{good} otherwise. Let $\sigma$ be the number of bad blocks
of $T$. As the upper bound to the maximum revenue
$r^*$ shown in Lemma~\ref{thm:UB for a-b path} still holds,\footnote{The proof is identical to the one previously shown.} we now present a general algorithm achieving a revenue equal to the given upper bound.

The four rules used by the algorithm are similar to
the ones used in the algorithm for the path and they are shown in
Figure \ref{fig:a-b-tree_rules_1} and \ref{fig:a-b-tree_rules_2}, along with the corresponding revenues.

Our algorithm is as follows. If $b \ge 3a$, then we split $T$ into
subtrees, each of them containing exactly one $a$-block. Then we
apply Rule 1 or Rule 2 to each subtree, depending on whether the $a$-block
in the subtree is good or bad. Clearly, this solution yields a revenue of
$c(T)-\sigma a$

\begin{figure}[t]
\begin{center}
	\includegraphics[scale=0.9]{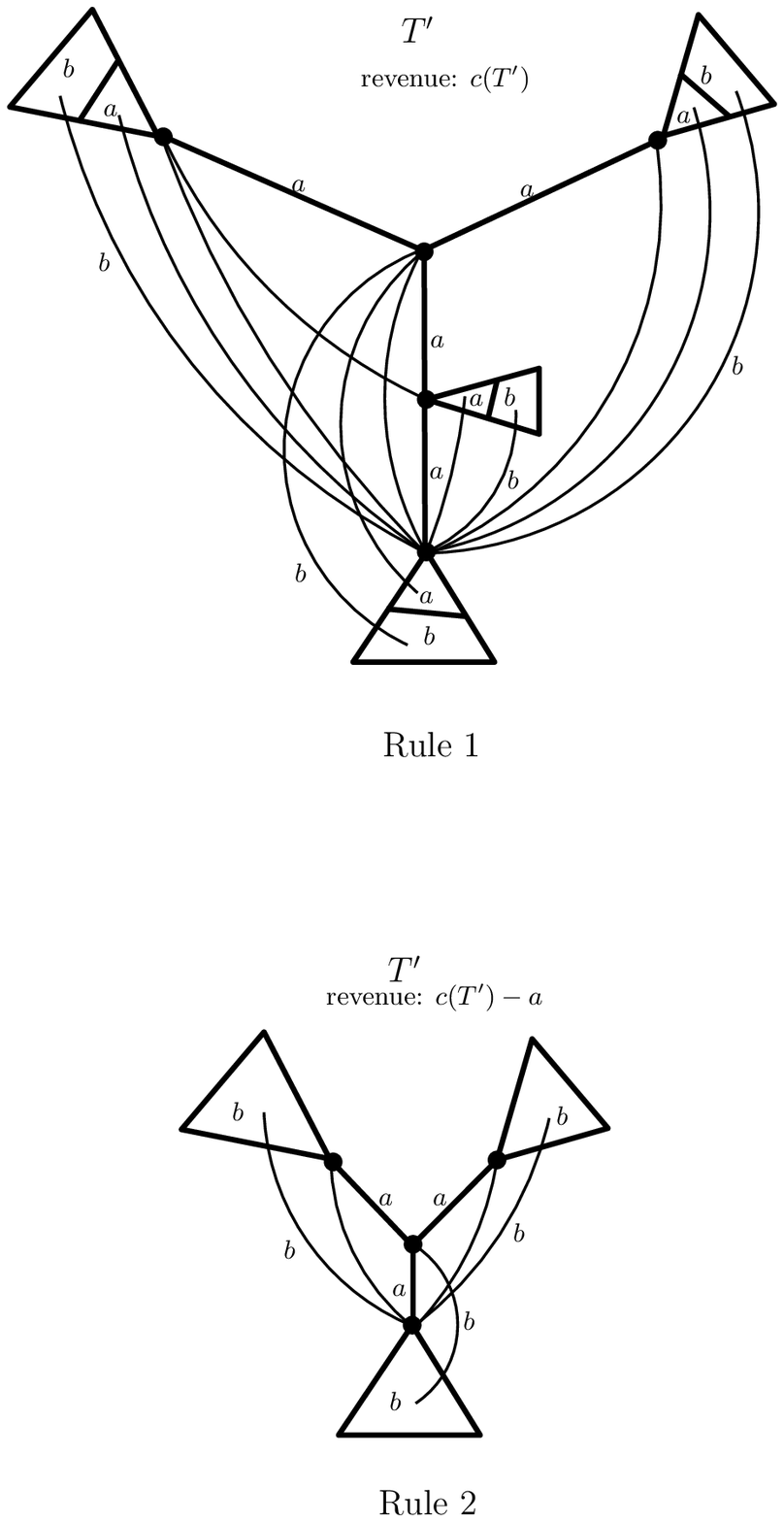}
\end{center}\caption{Rules 1 and 2 used by the algorithm to solve subtrees.
Edges without label are priced to $b$.}\label{fig:a-b-tree_rules_1}
\end{figure}

\begin{figure}[t]
\begin{center}
	\includegraphics[scale=0.85]{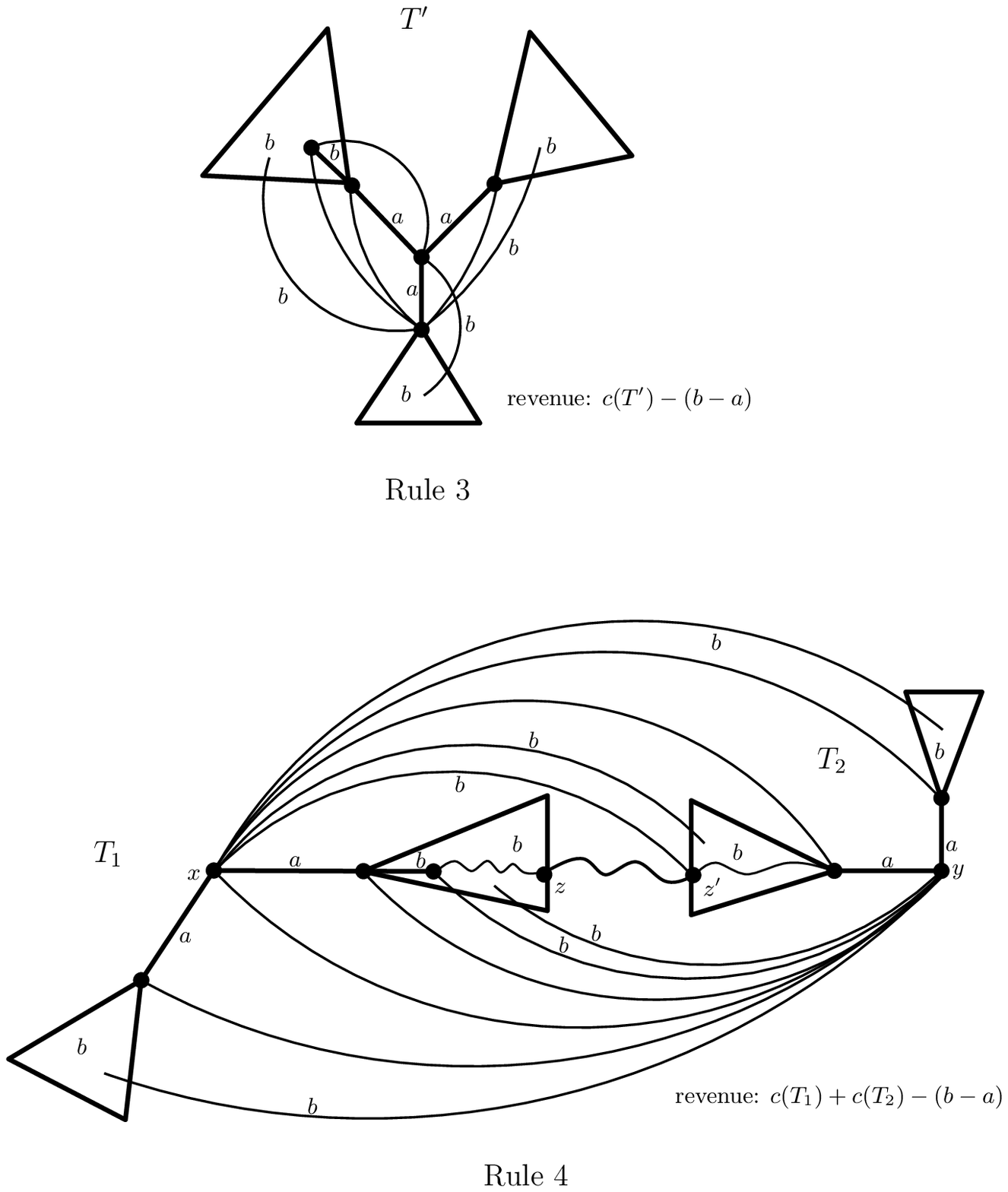}
\end{center}
\caption{Rules 3 and 4 used by the algorithm to solve subtrees.
Edges without label are priced to $b$. Notice that in Rule 4,
while there is a blue edge between $x$ and $z'$, there is no edge
between $y$ and $z$.} \label{fig:a-b-tree_rules_2}
\end{figure}

\afterpage{\clearpage}

Now, consider the case $b<3a$. Let $B_1, \dots, B_\sigma$ be the bad
$a$-blocks contained in $T$. The algorithm splits $T$ into
subtrees such that (i) each subtree contains exactly one
$a$-block, (ii) there exists a permutation $B'_1, \dots, B'_\sigma$ of
the bad $a$-blocks such that for every $i=0,\dots, \lfloor \sigma/2
\rfloor -1$, subtree containing $B'_{2i+1}$ has an edge of cost $b$
along the (unique) path joining $B'_{2i+1}$ with $B'_{2i+2}$, and
(iii) if $\sigma$ is odd, the subtree containing $B'_\sigma$ has an edge of
cost $b$.

Let $T_i$ be the subtree containing $B'_i$. The algorithm uses
Rule 4 for every pair of subtrees $T_{2i+1},T_{2i+2}$,
$i=0,\dots, \lfloor \sigma/2 \rfloor$, Rule 1 for every subtree
containing a good $a$-block. Finally, if $\sigma$ is odd, we apply Rule 3 for $T_\sigma$ when $b \le 2a$, while we use Rule 2 when $b>2a$.
From this, we have:

\begin{theorem}
\stack$(0,0)$ can be solved in polynomial time when red
edge costs are in $\{a,b\}$.
\end{theorem}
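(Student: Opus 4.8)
The plan is to combine the (already available) matching upper bound with a proof that the displayed algorithm is feasible, attains that bound, and runs in polynomial time. Since the proof of Lemma~\ref{thm:UB for a-b path} carries over verbatim to trees, we have $r^*\le c(T)-\min\{\sigma a,\ \lfloor\sigma/2\rfloor(b-a)+(\sigma-2\lfloor\sigma/2\rfloor)\min\{a,b-a\}\}$, so it suffices to exhibit a leader strategy of exactly this revenue. I would organize the argument around three claims: (a) the subtree decomposition with properties (i)--(iii) always exists; (b) each of the four tree-rules (Figures~\ref{fig:a-b-tree_rules_1} and~\ref{fig:a-b-tree_rules_2}) is valid, i.e.\ the follower's revenue-maximizing MST realizes the revenue attached to it; and (c) the per-subtree revenues sum to the claimed bound. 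Polynomial time is then immediate, since computing the $a$-blocks, the decomposition, and the rule pricings are all local tree operations.

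For (a), I would contract every maximal $a$-block of $T$ to a single node; the resulting graph $T'$ is a tree whose edges are exactly the $b$-edges of $T$ and whose nodes include one contracted node per bad block. Rooting $T'$ arbitrarily and matching the bad-block nodes greedily from the leaves upward pairs them into $\lfloor\sigma/2\rfloor$ pairs (with a single unmatched node iff $\sigma$ is odd). For each matched pair the unique $T'$-path between the two nodes is nonempty and, being composed of $b$-edges, supplies the separating $b$-edge required by (ii); a bottom-up matching lets these representatives be chosen pairwise distinct, and (for $\sigma\ge 2$, exactly as in the path case) it leaves a $b$-edge available for the unmatched node, giving (iii). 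Expanding the contraction back yields the subtrees, each carrying exactly one $a$-block as in (i).

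The heart of the proof is (b). The key local fact, which I would isolate first, is that a blue edge joining two vertices of the same $a$-block can be priced at most $a$ (its fundamental cycle in $T$ lies inside the block), whereas a blue edge whose $T$-path crosses a given $b$-edge can be priced up to $b$. Since blue edges are precisely the \emph{non-tree} pairs, the vertices of an $a$-block can be spanned using only intra-block blue edges (each charged $a$) exactly when the complement of the block's tree is connected on the block's vertices; this holds iff the block is \emph{not} a star. Hence a good block yields its full cost $c(\cdot)$ (Rule~1), while a star forces its center to be reached by one retained red $a$-edge, losing exactly $a$ (Rule~2). For Rules~3 and~4 the gain comes from routing a star's center-connection across the designated $b$-edge via a blue edge priced $b$: in Rule~4 this merges the two stars of a pair so that only the value $(b-a)$ is forfeited in total, and in Rule~3 the same device costs $(b-a)$ when $b\le 2a$. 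I expect this to be the main obstacle, as one must check that these local gadgets assemble into a single globally consistent MST---in particular that each designated $b$-edge is \emph{charged exactly once} and that the captures in different subtrees do not interfere. This is precisely where the disjointness of the subtrees and the ``one $b$-edge per pair'' guarantee from (a) are used.

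Finally, for (c) I would add up the contributions. Good blocks incur no loss; the $\lfloor\sigma/2\rfloor$ Rule~4 pairs lose $\lfloor\sigma/2\rfloor(b-a)$; and when $\sigma$ is odd the leftover star loses $\min\{a,b-a\}$ (Rule~3 if $b\le 2a$, Rule~2 otherwise). The total loss therefore equals the second term of the bound, and since $b<3a$ forces this quantity to be at most $\sigma a$, it equals the stated minimum. Together with the trivial $b\ge 3a$ case (uniform Rule~1/Rule~2, loss $\sigma a$) and the upper bound of Lemma~\ref{thm:UB for a-b path}, the algorithm's revenue equals $r^*$, proving optimality; the polynomiality of every step then completes the proof.
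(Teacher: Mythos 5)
Your proposal is correct and follows essentially the same route as the paper: the same upper bound of Lemma~\ref{thm:UB for a-b path} (carried over verbatim to trees), the same good/bad (non-star/star) block classification, the same decomposition into edge-disjoint subtrees with properties (i)--(iii), and the same four rules with the identical case analysis on $b\ge 3a$, $2a\le b<3a$, and $b<2a$. In fact, your write-up supplies details the paper relegates to figures or asserts without proof---notably the contraction-and-matching argument for the decomposition's existence and the observation that an $a$-block's vertices can be spanned by intra-block blue edges priced $a$ exactly when the block is not a star (complement connectivity)---so it is, if anything, a more complete rendering of the paper's argument.
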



\section{\stack$(0,0)$ can be approximated within $3/2 + \epsilon$ when the red edges form a path}

Here we design a $(\frac{3}{2} + \epsilon)$-approximation
algorithm for \stack$(0,0)$ when the tree $T$ is actually a path, say $P$. 

Let then $P$ be the path of red edges. The idea of the algorithm is to consider
three possible solutions and pick the best one. We will argue that the
revenue of such a solution is at least a fraction $\frac{2}{3}$ of
the cost of almost the entire path. More precisely, we select a
cheap subpath $\bar{P}$ of $P$ of length 2 or 3, and we then compute a
solution achieving a revenue of at least $\frac{2}{3}
(c(P)-c(\bar{P}))$.

Let $m=n-1$ be the length of $P$, and let $e_1,\dots,e_{m}$ be the
red edges of $P$ in the order of traversing $P$ from an endpoint
to the other one. Moreover, let us set $\ell$ to 2, if $m$ is
even, and to 3 otherwise. Let $P_i$ be the subpath of $P$ of
length $\ell$ starting from $e_i$, i.e., $P_i$ consists of the
edges $e_i,\dots,e_{i+\ell-1}$. Let $\bar{P}$ be the subpath with
minimum cost among $P_{2j-1}$, $j=1,\dots, \lfloor m/2 \rfloor$.
If we remove $\bar{P}$ from $P$, we obtain two paths of even
length, say $Q_1$ and $Q_2$. At most one of $Q_1$ and $Q_2$ may be
empty. Let us assume for the ease of presentation that both paths are
non-empty (similar arguments hold when this does not happen), and
let $2h$ and $2k$ be the length of $Q_1$ and $Q_2$, respectively.
Moreover, let $u_0,u_1,\dots,u_{2h}$ and
$v_0,v_1,\dots,v_{2k}$ be the nodes of $Q_1$ and $Q_2$,
respectively. The two end-nodes of $\bar{P}$ are $u_{2h}$ and
$v_0$. Let $x_i=c(u_{i-1},u_i)$ and $y_j=c(v_{j-1},v_j)$.
Finally, let $z$ be an internal node of $\bar{P}$ (see Figure
\ref{fig:path}).

\begin{figure}[t]
\begin{center}
\includegraphics[scale=0.9]{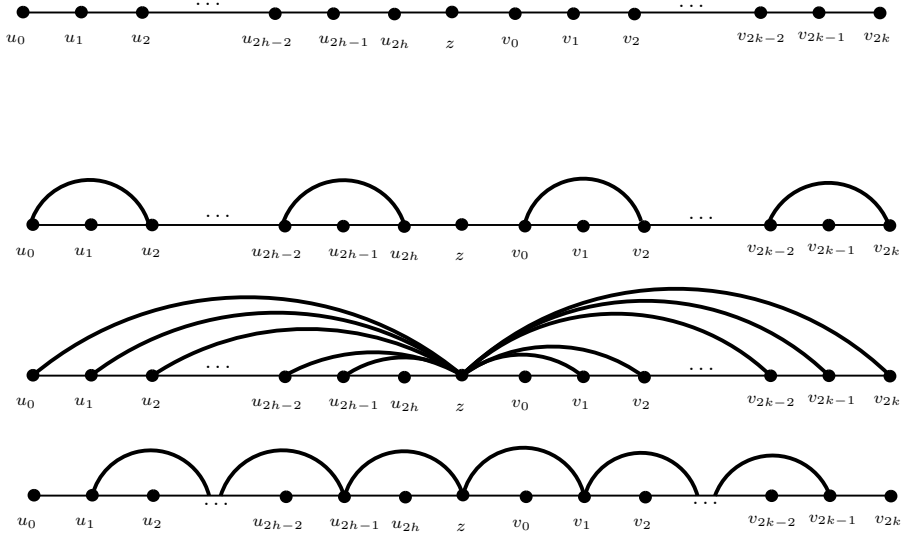}
\end{center}
\caption{The path and the three solutions considered by the
algorithm. Blue edges are in bold. Here $\bar{P}$ consists of 2
edges.} \label{fig:path}
\end{figure}

Let $A=\sum_{i=1}^h \max\{x_{2i-1},x_{2i}\} + \sum_{i=1}^k
\max\{y_{2i-1},y_{2i}\}$, and let $B= \sum_{i=1}^h
\min\{x_{2i-1},x_{2i}\} + \sum_{i=1}^k \min\{y_{2i-1},y_{2i}\}$.
Notice that $c(Q_1)+c(Q_2)=A+B$. The first solution we consider is
\begin{equation*}
F_1=\{(u_{2i-2},u_{2i}) \mid i=1,\dots,h \} \cup
\{(v_{2i-2},v_{2i}) \mid i=1,\dots,k \},
\end{equation*}

\noindent and the price function is defined as
$p(u_{2i-2},u_{2i})=\max\{x_{2i-1},x_{2i}\}$, and
$p(v_{2i-2},v_{2i})= \linebreak \max\{y_{2i-1},y_{2i}\}$. Notice that this
solution obtains a revenue $r_1 = A$.

The second solution is a star centered in the node $z$; more
precisely:
\begin{equation*}
F_2=\{(z, u_{i}) \mid i=0,\dots,2h-1 \} \cup \{(z,v_{i}) \mid
i=1,\dots,2k \},
\end{equation*}

\noindent and the prices are defined as $p(z,u_0)=x_1,
p(z,v_{2k})=y_{2k}, p(z,u_i)=\min\{x_i,x_{i+1}\}$, and
$p(z,v_i)=\min\{y_i,y_{i+1}\}$. Notice that this solution obtains
a revenue of
\begin{equation*}
r_2 = B+ x_1 + y_{2k} + \sum_{i=0}^{h-2} \min\{x_{2i+2},x_{2i+3}\}
+ \sum_{i=0}^{k-2} \min\{y_{2i+2},y_{2i+3} \}.
\end{equation*}

Finally, the third solution is the following:
\begin{multline*}
F_3=\{(u_{2i+1},u_{2i+3}) \mid i=0,\dots,h-2 \} \cup \\
\{(v_{2i+1},v_{2i+3}) \mid i=0,\dots,k-2 \} \cup
\{(u_{2h-1},z),(z,v_1)\},
\end{multline*}

\noindent and the pricing is as follows:
$p(u_{2h-1},z)=x_{2h},p(z,v_1)=y_1$,
$p(u_{2i+1},u_{2i+3})=\linebreak \max\{x_{2i+2},x_{2i+3}\}$, and
$p(v_{2i+1},v_{2i+3})=\max\{y_{2i+2},y_{2i+3} \}$. Hence, the
corresponding revenue is:
$$
r_3 =x_{2h} + y_1 + \sum_{i=0}^{h-2} \max\{x_{2i+2},x_{2i+3}\} +
\sum_{i=0}^{k-2} \max\{y_{2i+2},y_{2i+3} \}.
$$

Hence, we have:
\begin{eqnarray*}
r_1+r_2+r_3 &=& A+B+x_1+x_{2h}+y_{1}+y_{2k}+ \\
& & \sum_{i=0}^{h-2}(\min\{x_{2i+2},x_{2i+3}\}+\max\{x_{2i+2},x_{2i+3}\}) + \\
& & \sum_{i=0}^{k-2}(\min\{y_{2i+2},y_{2i+3}\}+\max\{y_{2i+2},y_{2i+3}\}) \\
\\ &=& 2(c(Q_1)+c(Q_2)),
\end{eqnarray*}

\noindent from which it follows that the revenue
$r=\max\{r_1,r_2,r_3\}$ is at least $\frac{2}{3}(c(Q_1)+c(Q_2))$. Now, observe that by construction we have

$$c(\bar{P}) \le \frac{c(P)}{\left\lfloor \frac{n}{\ell} \right\rfloor} \le \frac{3}{n-2} \left(c(\bar{P}) + c(Q_1) + c(Q_2) \right),$$

\noindent
and hence $c(\bar{P}) \le \frac{3}{n-5} \left( c(Q_1) + c(Q_2) \right)$.
Denoting by $r^*$ the optimal revenue, and observing that the cost of the red tree is always an upper bound to $r^*$, we then have
$$
\frac{r^*}{r} \leq \frac{c(P)}{r}=\frac{c(Q_1)+c(Q_2)}{r}+\frac{c(\bar{P})}{r} \le
\frac{3}{2} +\frac{\frac{3}{n-5} (c(Q_1)+c(Q_2))}{\frac{2}{3} (c(Q_1)+c(Q_2))}= \frac{3}{2} +\frac{9}{2n-10}.
$$

We have proved the following:
\begin{theorem}
\stack$(0,0)$ can be approximated within a factor of
$3/2+\epsilon$, for any $\epsilon >0$, when the red edges form a path.
\end{theorem}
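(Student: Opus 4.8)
The plan is to exploit the fact that the optimal revenue $r^*$ is bounded above by the total cost $c(P)$ of the red path, since the follower's minimum spanning tree can never be heavier than $P$ itself and hence the leader can never extract more than $c(P)$. The goal is therefore to produce, in polynomial time, a pricing whose revenue is at least roughly $\frac{2}{3}c(P)$, which immediately yields an approximation ratio approaching $3/2$.

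First I would orient $P$ and group its edges into consecutive pairs, so that the two pieces $Q_1$ and $Q_2$ lying on either side of a short, cheap discarded subpath $\bar{P}$ each have even length; $\bar{P}$ serves only to fix the parity and to isolate the two endpoints. The key is that $\bar{P}$ can be made negligible relative to $c(P)$: choosing it of minimum cost among the $\Theta(n)$ disjoint candidate short subpaths forces $c(\bar{P}) = O(c(P)/n)$. This is exactly where the additive $\epsilon$ enters, because the ratio will come out as $\frac{3}{2}+O(1/n)$, which is below $\frac{3}{2}+\epsilon$ once $n$ is large enough, while the finitely many small instances can be solved directly.

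The heart of the argument is an averaging trick. I would design three feasible pricings $F_1,F_2,F_3$ whose revenues $r_1,r_2,r_3$ sum to exactly twice the cost of $Q_1\cup Q_2$. Intuitively, for each consecutive pair of edges one solution bridges the pair with a single blue edge priced at the larger of the two red costs (capturing the $\max$), a second uses a star centered at an internal node $z$ of $\bar{P}$ to recover the smaller cost of each pair (capturing the $\min$), and a third uses a pairing offset by one edge whose $\max$-contributions are complementary to the first; together the aligned pairing, the offset pairing, and the boundary prices account for the cost of every edge of $Q_1$ and $Q_2$ exactly twice, so that $r_1+r_2+r_3 = 2\,(c(Q_1)+c(Q_2))$. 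Taking $r=\max\{r_1,r_2,r_3\}\ge \frac{2}{3}(c(Q_1)+c(Q_2))$ and combining with $r^*\le c(P)=c(\bar{P})+c(Q_1)+c(Q_2)$ and the bound on $c(\bar{P})$ gives $r^*/r \le \frac{3}{2}+O(1/n)$.

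The hard part will be arranging the three pricings so that they are simultaneously feasible and their revenues add up cleanly. The obstacle is that whether the follower actually purchases a given blue edge depends on it being the cheapest way to reconnect the two components created by deleting some red edge; I must therefore set each price to the $\max$ or $\min$ of the appropriate adjacent pair so that the intended blue edges are genuinely selected in the resulting tree, and no edge is double counted. The delicate points are the edges incident to the endpoints of $\bar{P}$ and the single ``seam'' edges that join $Q_1$ to $Q_2$ through $z$, which must be priced so that the clean identity $r_1+r_2+r_3 = 2\,(c(Q_1)+c(Q_2))$ survives at the boundaries.
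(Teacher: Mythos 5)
Your proposal follows essentially the same route as the paper's proof: the same upper bound $r^* \le c(P)$, the same choice of a minimum-cost subpath $\bar{P}$ of length $2$ or $3$ among $\Theta(n)$ disjoint candidates (so that $c(\bar{P}) = O(c(P)/n)$ supplies the additive $\epsilon$, with small instances handled directly), and the same three solutions --- aligned pairing priced at maxima, star at an internal node $z$ of $\bar{P}$ priced at minima, and offset pairing with seam edges through $z$ --- whose revenues sum to exactly $2\bigl(c(Q_1)+c(Q_2)\bigr)$. The feasibility and boundary details you flag as the hard part are resolved in the paper exactly as you anticipate, via the explicit constructions $F_1$, $F_2$, $F_3$ with prices $\max\{x_{2i-1},x_{2i}\}$, $\min\{x_i,x_{i+1}\}$, and the boundary terms $x_1, x_{2h}, y_1, y_{2k}$.
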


We point out that our algorithm is asymptotically tight with
respect to the adopted upper-bound scheme. An example is the path
in which $c(e_1)=1, c(e_2)=2$, and $c(e_i)=0$, for every $i>2$. It
is easy to see that for this path the revenue obtained by an
optimal solution is 2, while the total cost of the path is 3.


\section{\stack$(0,0)$ can be approximated within $7/4 + \epsilon$}

In this section we design an algorithm that achieves an
approximation ratio of $7/4+\epsilon$ for the general \stack$(0,0)$ game. 

The idea of the algorithm is to partition the red tree into
suitable subtrees for which we can guarantee a revenue of at least $4/7$ of the cost of each
one of them. Let $T=(V,E(T))$ be the red tree. We say that
$T_1=(V_1,E_1),\dots,T_\ell=(V_\ell,E_\ell)$ is a partition of $T$ into $\ell$
subtrees if (i) each $T_i$ is a subtree of $T$, (ii) $V=\bigcup_i
V_i, E(T)=\bigcup_i E_i$, and (iii) for each $i,j, i\neq j$, $E_i
\cap E_j =\emptyset$. 

It is easy to see that once $T$ is partitioned into subtrees as specified above, we can solve locally a \stack$(0,0)$ game for each red subtree of the partition, and then 
solve the original problem by joining together all the local solutions (by maintaining the corresponding pricing). Indeed, the union of all the trees associated with the local solutions is clearly a spanning tree of $G$. Hence, we can claim the following

\begin{lemma}\label{lm:decomposition_property}
Let $T_1,\dots,T_\ell$ be a partition of $T$ into
$\ell$ subtrees. For each $i$, let $r_i$ be the revenue returned by a local solution of $T_i$. Then, the revenue which can be obtained for $T$ is at least $\sum_{i=1}^\ell r_i$.
\end{lemma}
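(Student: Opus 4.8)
The plan is to exhibit an explicit global strategy for the leader built from the local ones, and then argue that the follower's best response is at least as good for the leader as the juxtaposition of the local responses. Concretely, let $F_i$ (with its local pricing) be the set of blue edges used in the local solution of $T_i$, and let $M_i$ be the tree the follower selects in the local game on $G_i=(V_i,E(T_i)\cup F_i)$, so that $r_i=r(M_i)$. The leader's combined strategy adds $F=\bigcup_i F_i$ to $T$ with the same prices; this is feasible since $\gamma\equiv 0$ and $\Delta=0$. Setting $\hat{T}=\bigcup_i M_i$, I would observe that the blue edges of $\hat{T}$ are exactly $\bigcup_i (M_i\cap F_i)$ with their local prices, so $r(\hat{T})=\sum_{i=1}^\ell r_i$. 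Hence, once I prove that $\hat{T}$ is a minimum spanning tree of $G=(V,E(T)\cup F)$, the claim follows immediately: the follower chooses a revenue-maximizing MST, so his revenue is at least $r(\hat{T})=\sum_i r_i$.

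The reduction therefore rests on showing that $\hat{T}$ is an MST of $G$. First I would check that $\hat{T}$ is a spanning tree. Each $M_i$ spans $V_i$, so it has $|V_i|-1$ edges; since each $T_i$ is a subtree, $|E_i|=|V_i|-1$, and since the $E_i$ partition $E(T)$ we get $\sum_i(|V_i|-1)=|E(T)|=|V|-1$. As the $M_i$ are pairwise edge-disjoint (their edge sets lie in the disjoint fundamental cycle spaces of the $T_i$, and more directly any two subtrees of a tree that are edge-disjoint meet in at most one vertex), $\hat{T}$ has exactly $|V|-1$ edges and spans $V$; connectivity is inherited from the fact that $\bigcup_i T_i=T$ is connected with the same junction structure, so $\hat{T}$ is a tree.

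The heart of the argument is the minimality of $\hat{T}$, which I would establish through the cycle property: a spanning tree is an MST precisely when every non-tree edge is a maximum-weight edge of its fundamental cycle. Take any $e\in G\setminus\hat{T}$. If $e$ is red it lies in some $E_i$, and if $e$ is blue it lies in some $F_i$; in either case both endpoints of $e$ belong to a single $V_i$, because blue edges of $F_i$ join vertices of $V_i$ by construction and red edges of $T_i$ join adjacent vertices of $T_i$. The crucial locality step is that the $\hat{T}$-path between two vertices of $V_i$ stays entirely inside $M_i$: since the subtrees meet pairwise in at most one vertex and are arranged in a tree-structured decomposition, a simple path leaving $M_i$ would have to re-enter through the same junction vertex, which is impossible. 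Thus the fundamental cycle of $e$ in $\hat{T}$ is a cycle of $G_i$, and since $M_i$ is an MST of $G_i$, the edge $e$ is a maximum-weight edge of that cycle. Applying this to every non-tree edge shows $\hat{T}$ is an MST of $G$, completing the proof. The main obstacle I anticipate is precisely this convexity-of-subtrees step: it must be argued carefully (and, if desired, together with the follower's tie-breaking convention) that fundamental cycles never cross between distinct parts of the partition, as this is what decouples the global MST into the local ones.
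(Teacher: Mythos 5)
Your proposal is correct and follows essentially the same route as the paper, which simply asserts that the union of the trees associated with the local solutions is a spanning tree of $G$ and concludes that the revenues add up. You supply the details the paper leaves implicit --- edge-disjointness via the fact that edge-disjoint subtrees of $T$ share at most one vertex, the edge count, and the verification through the cycle property that $\hat{T}$ is an MST (where your ``crucial locality step'' is in fact immediate: $M_i$ is a connected subgraph of the tree $\hat{T}$ containing both endpoints of $e$, so it contains the unique $\hat{T}$-path between them) --- together with the follower's revenue-maximizing tie-breaking convention.
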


Moreover, we can prove the following:

\begin{lemma}\label{lm:decomposition_alg}
Let $T$ be a tree rooted at a node $s$. There always exists a
partition of $T$ into $\ell$ subtrees $T_1,\dots,T_\ell$ such that
\begin{itemize}
    \item $T_\ell$ has at most 2 edges and at least one of them is incident to
$s$;
    \item for every $1\le j \le \ell-1$, $T_j$ is either (i) a path of 3 or
    4 edges, or (ii) a star with at least 3 edges.
\end{itemize}
Moreover, this partition can be found in polynomial time.
\end{lemma}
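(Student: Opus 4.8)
The plan is to prove Lemma~\ref{lm:decomposition_alg} constructively, by a greedy peeling procedure working from the leaves of the rooted tree $T$ towards the root $s$. I would process the tree in a bottom-up fashion, repeatedly identifying a subtree at maximal depth that can be carved off as one of the allowed pieces (a path of $3$ or $4$ edges, or a star with at least $3$ edges), removing it from $T$, and recursing on what remains. The final remnant containing $s$ is handled separately so as to satisfy the special requirement that $T_\ell$ has at most $2$ edges with at least one incident to $s$. Since each extraction removes at least one edge and the procedure terminates when no edges are left, polynomial running time will be immediate; the substance of the proof lies entirely in showing that such an extraction is \emph{always} possible until fewer than, say, $3$ edges remain near the root.

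The key structural step is a case analysis on a deepest leaf and its neighbourhood. Consider a node $v$ of maximum depth among nodes that still have children after peeling (equivalently, look at the deepest ``branching'' or deepest chain). First I would handle the case where $v$ has many children: if $v$ together with its (leaf) children forms a star with at least $3$ edges, I carve off that star as one $T_j$ of type (ii). Otherwise $v$ has at most $2$ children-edges below it, and I would instead look along the path going up from $v$: starting from a deepest leaf, follow the unique path upward and cut off a segment of length $3$ or $4$ as a type-(i) piece. The delicate bookkeeping is to choose the cut so that what dangles from the cut point never gets orphaned into a piece of size $1$ or $2$ (other than the final piece at $s$). Concretely, whenever a chain of length $5$ or $6$ remains I split it as $3{+}2$ or $3{+}3$ or $4{+}2$ so that no leftover stub of length $1$ arises unless it can be absorbed into the next piece higher up; a short chain of length exactly $1$ or $2$ hanging off a node is deferred and attached to the piece that will contain its parent.

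The main obstacle I anticipate is exactly this absorption/merging argument at the boundaries: after greedily carving stars and long paths, small leftover fragments (single edges or $2$-edge paths) can appear hanging off internal nodes, and one must argue they can always be merged upward into a legal piece without violating the size bounds, and that this merging telescopes cleanly until only the $s$-rooted remnant of at most $2$ edges is left. I would formalize this with an invariant maintained along the recursion: after each peeling step, the still-unassigned part of $T$ is again a tree rooted at $s$, and any node all of whose descendants have been removed is incident to at most $2$ unassigned edges on the path towards $s$ that are reserved to be merged with a higher piece. Proving that this invariant is preserved under every extraction rule is the crux. I would treat the boundary counts ($3$ vs.\ $4$ for paths, ``$\ge 3$'' for stars) by checking, in each case of the analysis, that the residual fragment sizes fall in $\{0,1,2\}$ and hence can be forwarded, so that only the very last piece touching $s$ is allowed to be small; this is what yields the stated form of $T_\ell$. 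Once the invariant and the termination-with-small-$s$-remnant are established, Lemma~\ref{lm:decomposition_alg} follows, and combined with Lemma~\ref{lm:decomposition_property} it reduces the approximation task to solving each allowed piece type locally.
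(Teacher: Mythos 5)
Your high-level strategy --- peel legal pieces bottom-up from the deepest level and leave a small remnant at $s$ --- is the same one the paper uses, but your proposal stops exactly where the actual proof has to start: you yourself flag the absorption/merging invariant as ``the crux'' and never establish it, and the concrete devices you sketch would in fact fail. First, your star rule carves off only $v$ together with its leaf children, so it needs $v$ to have at least three children; a deepest vertex with exactly two leaf children then falls into your path case, where cutting a $3$- or $4$-edge segment upward through it strands the sibling leaf as a $1$-edge orphan hanging off an already-removed vertex. The paper's decomposition avoids this with a small but essential twist: when a deepest leaf $v$ has a sibling, it carves the star centered at the parent $\bar{v}$ consisting of \emph{all} edges down to $S(\bar{v})$ \emph{together with the upward edge} $(\bar{v},\bar{\bar{v}})$; this is a legal star with at least $3$ edges already when $\bar{v}$ has only two children, so branching at the deepest level never produces leftovers. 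Second, your deferral mechanism does not typecheck: a $1$- or $2$-edge stub hanging from a vertex cannot in general be ``attached to the piece that will contain its parent,'' since gluing it to an internal vertex of a $3$- or $4$-edge path yields a caterpillar that is neither an allowed path nor a star, and your proposed $3{+}2$ and $4{+}2$ chain splits themselves create $2$-edge pieces, which are permitted only for the single final piece incident to $s$.

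The paper instead proves the lemma with five rules applied in priority order to a deepest leaf $v$, and the ordering is precisely what makes deferral unnecessary: (1) if $v$ has a sibling and $d(v)\ge 2$, take the star at $\bar{v}$ as above; (2) if $\bar{v}$ has a sibling $u$ that is a leaf, take the $3$-edge path $(v,\bar{v}),(\bar{v},\bar{\bar{v}}),(\bar{\bar{v}},u)$; (3) if $\bar{v}$ has a non-leaf sibling $u$, then $u$ has a unique child $u'$ (otherwise rule (1) would have fired on a child of $u$), and one takes the $4$-edge path through $\bar{\bar{v}}$, $u$, $u'$; (4) if $d(v)\ge 3$ and none of the above applies, then $v$ is the unique child of $\bar{v}$ and $\bar{v}$ the unique child of $\bar{\bar{v}}$, so the $3$-edge path $(v,\bar{v}),(\bar{v},\bar{\bar{v}}),(\bar{\bar{v}},\bar{\bar{\bar{v}}})$ detaches cleanly; (5) if the remaining tree is a star with at least $3$ edges, take it whole. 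A short check then shows that when no rule applies the residue has at most two edges, one incident to $s$, which is exactly the required $T_\ell$. Note that rules (2)--(4) cover exactly the configurations in which your chain-cutting would strand a stub; without an explicit rule system of this kind, the invariant you postulate is not merely unproven but false for the moves you describe, so the gap in your argument is genuine.
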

\begin{proof}
We provide a polynomial-time algorithm that finds the partition of
the lemma. Let 
$d(v)$ denote the depth of $v$ in $T$, i.e., the number of edges of the path (in $T$)
between $s$ and $v$. We denote by $S(v)$ the set of the children
of $v$. Moreover, we use $\bar{v}$ to denote the parent of $v$. We
proceed in phases. In phase $j$, we find a subtree $T_j$ by
applying one of the rules below (we consider them in order), then
we remove $T_j$ from $T$ and we move to the next phase. We stop when no
rule can be applied. Let $L$ be the set of leaves of $T$ with
depth equal to the current height of $T$. The rules are the following (see Figure
\ref{fig:rules}):
\begin{description}
    \item[Rule 1:] if there exists a node $v \in L$ with $d(v)\ge 2$ and
    such that $v$ has at least one sibling, then $T_j$ is the star
    with edge set $\{(\bar{v},\bar{\bar{v}})\} \cup \{(\bar{v},u) \mid u \in S(\bar{v})
    \}$;
    \item[Rule 2] if there exists a node $v \in L$ with $d(v)\ge 2$ such that
    $\bar{v}$ has a sibling $u$ and $u$ is a leaf, then $T_j$ is
    the path with edge set
    $\{(v,\bar{v}),(\bar{v},\bar{\bar{v}}),(\bar{\bar{v}},u)\}$;
    \item[Rule 3:] if there exists a node $v \in L$ with $d(v)\ge 2$ such that
    $\bar{v}$ has a sibling $u$ and $u$ is not a leaf, then let
    $u'$ be the unique child of $u$ ($u'$ must be unique otherwise Rule 1 would
    apply). Then, $T_j$ is the path with edge set
    $\{(v,\bar{v}),(\bar{v},\bar{\bar{v}}),(\bar{\bar{v}},u),(u,u')\}$;
    \item[Rule 4:] if there exists a node $v \in L$ with $d(v)\ge 3$, then $T_j$ is the path with edge
    set
    $\{(v,\bar{v}),(\bar{v},\bar{\bar{v}}),(\bar{\bar{v}},\bar{\bar{\bar{v}}})\}$;
    \item[Rule 5:] if $T$ is a star with at least 3 edges, then $T_j=T$.
\end{description}
Now, assume that the last phase is phase $\ell-1$, then we set $T_\ell$
equal to the remaining tree $T$. If there is no edge left, we set
$T_\ell$ equal to the empty subtree. It is easy to see that if $T_\ell$
is non-empty, it must have at most 2 edges, and one of them must be
incident to $s$. Moreover, since each phase takes
polynomial time and each $T_j$ with $j < \ell$ contains at least one
edge, the claim follows. \qed
\end{proof}

\begin{figure}[t]
\begin{center}
	\includegraphics{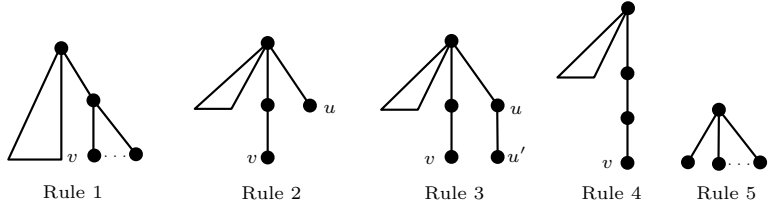}
\end{center}
\caption{The five rules of the decomposition algorithm.}
\label{fig:rules}
\end{figure}

The following lemmas allow us to obtain a revenue of at least $\frac{4}{7} \,c(T_i)$ for each subtree $T_i, i=1,\dots,\ell-1$, of the decomposition.

\begin{lemma}\label{lm:star}
Let $S$ be a star with at least 3 edges, then we can obtain a
revenue of at least $\frac{2}{3} c(S)$.
\end{lemma}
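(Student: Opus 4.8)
The plan is to exhibit an explicit pricing that extracts revenue $c(S)-c_{\min}$, where $c_{\min}$ is the smallest edge cost of the star, and then to observe that the hypothesis $d\ge 3$ forces $c_{\min}\le c(S)/3$. Let $w$ be the center of $S$, let $v_1,\dots,v_d$ (with $d\ge 3$) be its leaves, and let $c_i=c(w,v_i)$ be the cost of the red edge $(w,v_i)$; relabel the leaves so that $c_d=\min_i c_i$. First I would keep the single red edge $(w,v_d)$ of minimum cost untouched, and for every other leaf I would activate the blue edge $(v_i,v_d)$ and price it at $p(v_i,v_d)=c_i$. Since we are in the \stack$(0,0)$ setting, all of these blue edges are available at no activation cost, so this is a feasible strategy for the leader.

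The core of the argument is to show that the follower's MST consists exactly of the red edge $(w,v_d)$ together with all the blue edges $(v_i,v_d)$, $i\ne d$, so that the leader collects $\sum_{i\ne d}c_i=c(S)-c_d$. I would verify this via the cycle property: the proposed edge set is a spanning tree of weight $c_d+\sum_{i\ne d}c_i=c(S)$, and its only non-tree edges are the red edges $(w,v_i)$ with $i\ne d$. The cycle that each such edge closes with the tree is the triangle on $w,v_i,v_d$, whose edge weights are $c_i$ (red $(w,v_i)$), $c_d$ (red $(w,v_d)$) and $c_i$ (blue $(v_i,v_d)$); as the non-tree edge attains the maximum weight $c_i$ on its own cycle, the proposed tree is a minimum spanning tree. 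Since our tree and the all-red star both have total weight $c(S)$, there is a tie; invoking the standing convention that the follower breaks ties in favour of the leader, the follower returns precisely the tree containing all the blue edges, yielding revenue $c(S)-c_d$.

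It then remains to convert this revenue into the claimed ratio. Because $S$ has $d\ge 3$ edges and $c_d$ is the cheapest of them, we have $c_d\le c(S)/d\le c(S)/3$, whence the revenue is at least $c(S)-c(S)/3=\tfrac{2}{3}\,c(S)$, as required. The one point that needs care is the tie-breaking step: if one prefers not to rely on the follower's favourable tie-breaking, the same bound is obtained up to a vanishing additive term by pricing each blue edge at $c_i-\epsilon$, which makes the blue edge strictly cheaper than its parallel red edge $(w,v_i)$ and thus forces its selection; letting $\epsilon\to 0$ recovers revenue arbitrarily close to $c(S)-c_d$, and hence the factor $\tfrac{2}{3}$.
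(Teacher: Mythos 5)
Your proof is correct and takes essentially the same route as the paper: both build a star of blue edges centered at the leaf incident to the cheapest red edge, price each blue edge $(v_i,v_d)$ at $c_i$, and bound the forfeited minimum edge by $c(S)/3$ using the hypothesis that the star has at least $3$ edges. Your explicit verification via the cycle property and the tie-breaking (or $\epsilon$-discount) remark merely spell out details the paper leaves implicit.
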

\begin{proof}
Let $s$ be the center of the star, and let $u_1,\dots,u_{t}$ be
the leaves ordered such that $c(s,u_1) \le c(s,u_2)\le \dots \le
c(s,u_{t})$. The set of blue edges $F=\{(u_1,u_j) \mid
j=2,\dots,t\}$ yields a revenue of $\sum_{j=2}^{t} c(s,u_j) \ge
\frac{2}{3} c(S)$, since $t\geq 3$. \qed
\end{proof}

\begin{lemma}\label{lm:path}
Let $P$ be a path of 3 or 4 edges, then we can obtain a revenue of
at least $\frac{4}{7} c(P)$.
\end{lemma}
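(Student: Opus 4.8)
The plan is to exhibit, for each possible length of $P$, a small repertoire of explicit pricings and to show that the best of them already collects at least $\frac{4}{7}c(P)$. Throughout I will use the fact that, since the graph is complete and activation is free, the leader may insert any chord $(v_i,v_j)$ of the path, and that a chord which the follower keeps in his tree can be priced exactly at the bottleneck of its fundamental cycle, i.e.\ at the largest red cost lying on the tree path between its endpoints; by the cut property this is precisely the condition for the chord to be selected, and the tie-breaking convention resolves the equality in the leader's favour. Hence every candidate solution I describe consists of a set of chords together with these bottleneck prices, and the revenue it yields is the sum of the bottlenecks, provided no discarded red edge admits an improving swap.

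First I would dispose of the easy case in which $P$ has three edges, with costs $x_1,x_2,x_3$. Here I keep the cheapest of the three red edges and use two chords to charge the other two at their full costs: if, say, $x_2=\min_i x_i$ I keep the middle edge and add the chords $(v_0,v_2)$ and $(v_1,v_3)$, priced at $x_1$ and $x_3$ respectively (the symmetric choices handle the case in which the minimum sits at an endpoint). A direct check of the three fundamental cuts shows the follower keeps both chords and the cheapest red edge, so the revenue equals $x_1+x_3=c(P)-\min_i x_i\ge \frac{2}{3}c(P)\ge\frac{4}{7}c(P)$.

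The real work is the case in which $P$ has four edges $x_1,x_2,x_3,x_4$. I would consider a handful of candidate pricings, each obtained by discarding a subset of red edges and bridging the resulting segments with chords priced at their bottlenecks: the alternating matching $\{(v_0,v_2),(v_2,v_4)\}$, whose revenue is $\max\{x_1,x_2\}+\max\{x_3,x_4\}$; the solution charging the two interior edges, with revenue $x_2+x_3$; the solution charging the two end edges, with revenue $x_1+x_4$; and the two \emph{nested} solutions obtained by keeping one endpoint edge and charging the remaining three. For each I would write the revenue as an explicit function of $x_1,\dots,x_4$, verify via the cut property that the intended chords survive, and then argue that the maximum of these quantities is always at least $\frac{4}{7}\sum_i x_i$. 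This last step I would carry out by a case analysis on the relative order of the four costs (equivalently, by producing nonnegative multipliers that combine the candidate revenues into a lower bound of $\frac{4}{7}\sum_i x_i$), exhibiting at the end the balanced configuration on which the bound is met with equality.

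The step I expect to be the main obstacle is the feasibility check, not the arithmetic. Pricing a chord at the bottleneck of its fundamental cycle is sound only if \emph{every} discarded red edge fails to give the follower a strictly cheaper tree; the danger is that a cheap red edge we intend to charge has, in the constructed tree, a detour between its endpoints that runs through a more expensive kept edge or chord, which would let the follower swap it back in and wipe out the claimed revenue. Keeping these fundamental-cycle constraints mutually consistent across all charged edges dictates which edges may be kept and which chord serves as each bridge, and reconciling this with the requirement that the surviving revenue still reach $\frac{4}{7}c(P)$ in the worst case is where the delicate part of the proof lies.
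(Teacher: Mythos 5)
Your argument contains a genuine error, and it is exactly at the spot you yourself flag as the main obstacle. The premise that a chord kept by the follower ``can be priced exactly at the bottleneck of its fundamental cycle'' is valid only when a single chord is added: once several chords are present, the largest price at which a chord survives is the minimum, over \emph{all} simple cycles containing it in $(V,E(T)\cup F)$, of the largest red cost on that cycle (the paper's Equation~(1)), and cycles passing through the \emph{other} blue edges can cap a price strictly below the fundamental-cycle bottleneck. This sinks your 3-edge case. Take the path $v_0,v_1,v_2,v_3$ with costs $(x_1,x_2,x_3)=(1,4,2)$, so the minimum sits at an endpoint and the maximum in the middle. Your ``symmetric'' solution keeps $e_1$ and prices $(v_0,v_2)$ at $x_2=4$ and $(v_1,v_3)$ at $x_3=2$; but then $(v_0,v_2)$ is the strict maximum on the cycle $(v_0,v_2),(v_2,v_3),(v_3,v_1),(v_1,v_0)$ of weights $4,2,2,1$, so it belongs to no MST and the follower drops it. No repair exists: enumerating all acyclic blue sets with their optimal prices from Equation~(1) shows the best attainable revenue on this instance is $4=\frac{4}{7}\,c(P)$, whereas your claimed $c(P)-\min_i x_i = 6 \ge \frac{2}{3}\,c(P)$. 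So the bound you assert for 3-edge paths is false, not merely unproven, and this configuration (endpoint minimum, middle maximum) is precisely why the constant in the lemma is $\frac{4}{7}$ and not $\frac{2}{3}$.

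This is also where you diverge from the paper, which splits the 3-edge case on whether the middle edge is the cheapest: if it is, revenue $c_2+c_3\ge\frac{2}{3}\,c(P)$ is achievable (your construction is sound in that subcase); if it is not, the paper settles for three explicit solutions of revenues $c_3$, $2c_2$ and $2c_1+c_2$, whose maximum is at least $\frac{4}{7}\,c(P)$, with equality at $(c_1,c_2,c_3)=(1,2,4)$ --- the instance above. For four edges your text is a plan rather than a proof: both the feasibility verification and the ``nonnegative multipliers'' are deferred, and several candidates on your list are infeasible at the stated revenues for the same reason as above (keeping both end edges cannot in general charge the interior edges fully at $x_2+x_3$, since any chord replacing an interior edge must leave the interior segment and its price is then capped by cycles through the kept cheap edges; the ``nested'' solutions charging three of four edges fail likewise). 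The one robust candidate you name is the matching with revenue $\max\{x_1,x_2\}+\max\{x_3,x_4\}$, which the paper also uses as its first solution, complemented by two star-type solutions of revenues $m_1+3m_2$ and $2m_1+m_2$ (w.l.o.g.\ $m_1\ge m_2$). Your closing paragraph diagnoses the mutual consistency of the cycle constraints correctly, but the proposal does not overcome it, and as written the argument does not establish the lemma.
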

\begin{proof}
Let us consider the path of 3 edges first. Let $0 \le c_1 \le c_2
\le c_3$ be the edge costs. If the cost of the middle edge is $c_1$, we can
easily obtain a revenue of $c_2+c_3 \ge \frac{2}{3}  c(P)$. Assume that
the cost of the middle edge is not $c_1$. In Figure \ref{fig:path3_4} three
solutions are shown. The corresponding revenues are: $c_3, 2 c_2,
2 c_1+ c_2$. A trivial calculation shows that the maximum of the
three revenues is at least $\frac{4}{7}  c(P)$.

\begin{figure}[t]
\begin{center}
\includegraphics{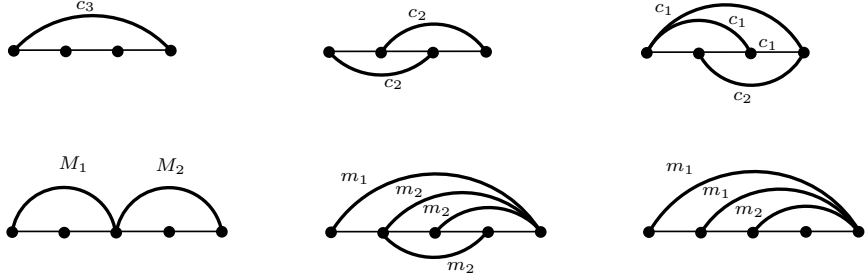}
\end{center}
\caption{Three possibile solutions for paths of 3 or 4 edges.}
\label{fig:path3_4}
\end{figure}

Now, we consider a path of 4 edges. Let $x_1,\dots,x_4$ be the
costs of the edges from left to right. We set $M_1=\max
\{x_1,x_2\}, m_1=\min \{x_1,x_2\},M_2=\max \{x_3,x_4\},m_2=\min
\{x_3,x_4\}$. Assume w.l.o.g. that $m_1 \ge m_2$. Three solutions
are shown in Figure \ref{fig:path3_4}. The corresponding revenues
are: $M_1+M_2, m_1+ 3 m_2, 2 m_1+m_2$, the maximum of which is easy to see to be at least $\frac{4}{7} c(P)$. \qed
\end{proof}

We are now ready to prove the following:

\begin{theorem}
\stack$(0,0)$ can be approximated within a factor of
$7/4+\epsilon$, for any constant $\epsilon >0$.
\end{theorem}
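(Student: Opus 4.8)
The plan is to combine the four preceding lemmas into a single charging argument against the trivial upper bound $r^* \le c(T)$, which holds because the follower's MST has total weight at most $c(T)$ and the leader's revenue is only the portion of that weight paid to blue edges. First I would root $T$ at a conveniently chosen node $s$ and invoke Lemma~\ref{lm:decomposition_alg} to obtain a partition $T_1,\dots,T_\ell$ in which every $T_j$ with $j<\ell$ is either a path of $3$ or $4$ edges or a star with at least $3$ edges, while $T_\ell$ carries at most two edges, one incident to $s$. For each such $T_j$ ($j<\ell$) I would run the corresponding local algorithm: Lemma~\ref{lm:path} yields a local revenue of at least $\frac{4}{7}\,c(T_j)$ on the path blocks, and Lemma~\ref{lm:star} yields at least $\frac{2}{3}\,c(T_j) \ge \frac{4}{7}\,c(T_j)$ on the star blocks. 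Assembling these local solutions via the decomposition property of Lemma~\ref{lm:decomposition_property}, the leader realizes a global revenue
\begin{equation*}
r \;\ge\; \sum_{j=1}^{\ell-1} \tfrac{4}{7}\, c(T_j) \;=\; \tfrac{4}{7}\big(c(T)-c(T_\ell)\big).
\end{equation*}

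Plugging this into the upper bound gives $\frac{r^*}{r} \le \frac{7}{4}\cdot \frac{c(T)}{c(T)-c(T_\ell)}$, so the whole theorem reduces to showing that the discarded block $T_\ell$ can be made cheap, namely $c(T_\ell) \le \epsilon'\,c(T)$ for a suitable $\epsilon' = \Theta(\epsilon)$; this immediately yields $\frac{r^*}{r}\le \frac{7}{4} + \epsilon$. Everything up to this point is a direct and routine assembly of Lemmas~\ref{lm:decomposition_property}--\ref{lm:path}, with the only arithmetic being the verification that $\frac{2}{3}\ge\frac{4}{7}$ so that star blocks are no worse than path blocks.

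The hard part will be precisely this last step, because $T_\ell$ is exactly the fragment that the decomposition cannot monetize: on two vertices spanned by a single red edge there is no blue edge to price, so the at most two edges of $T_\ell$ are lost outright, and if one of them happens to be very expensive the multiplicative loss does not vanish no matter how large $n$ is. To control it I would exploit the freedom left open by Lemma~\ref{lm:decomposition_alg} in the choice of the root: running the decomposition from every possible root $s$ (there are only $n$ of them) and keeping the most profitable solution, so that it suffices to argue that \emph{some} rooting confines the two leftover edges to a cheap region of $T$. The degenerate instances in which a constant fraction of $c(T)$ is concentrated on at most two edges, and which no rooting can avoid, form a bounded-size core that I would treat separately and solve optimally by brute force, enumerating for the few relevant vertices which blue edges the follower eventually buys and then computing the optimal pricing in polynomial time, as recalled in the introduction. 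Fixing the threshold that separates this ``concentrated'' case (handled exactly) from the ``diffuse'' case (handled by the $\frac{4}{7}$ decomposition) is what produces the additive $\epsilon$ and keeps the running time polynomial for every fixed $\epsilon$; this balancing is the only genuinely delicate point of the argument.
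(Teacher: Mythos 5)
Your assembly of Lemmas~\ref{lm:decomposition_property}--\ref{lm:path} into the bound $r \ge \frac{4}{7}\bigl(c(T)-c(T_\ell)\bigr)$, measured against the trivial upper bound $r^* \le c(T)$, is exactly how the paper's proof begins, and your reduction of the theorem to controlling the leftover block $T_\ell$ is correct. But the step you yourself flag as the delicate one is where the proposal has a genuine gap, and it starts with a factual error: the edges of $T_\ell$ are \emph{not} ``lost outright.'' If $T_\ell$ has two red edges, a single blue edge joining its two far endpoints, priced at the larger of the two red costs, recovers $c(T_\ell)-\min_{e\in E(T_\ell)}c(e)$; so the true loss is the cost of one red edge, and since one edge of $T_\ell$ is incident to the root $s$, that loss is at most $\mu(s):=\max_{u:(u,v_{})\in E(T),\,v=s} c(u,s)$. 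This recovery is precisely what makes the paper's finish work: rooting at the vertex $s$ minimizing $\mu$, one has $r \ge \frac{4}{7}\bigl(c(T)-\mu(s)\bigr)$, and since each edge contributes to $\mu(v)$ for at most its two endpoints, $c(T)\ge \frac{1}{2}\sum_{v}\mu(v)\ge \frac{n}{2}\mu(s)$, giving $\frac{r^*}{r}\le \frac{7}{4}+\frac{7}{2n-4}$; the finitely many instances with $n$ below a threshold depending on $\epsilon$ are solved exactly by enumerating all (constantly many) subsets of blue edges and pricing each optimally via \cite{Car07}.

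Your substitute plan does not close the gap. Bounding $c(T_\ell)$ itself (rather than its cheapest edge) by trying all $n$ roots cannot work by root choice alone: the second edge of $T_\ell$ is only guaranteed to lie within distance $2$ of $s$ and is selected by the (nondeterministic) decomposition, so you would need \emph{some} root whose entire radius-$2$ ball is cheap; in small-diameter trees (e.g.\ a double star whose central edge carries a constant fraction of $c(T)$) the expensive edge lies in every vertex's radius-$2$ ball, so the averaging-over-roots argument has no traction, and ruling the expensive edge out of $T_\ell$ would require analyzing the algorithm's execution, which you do not do. The fallback for ``concentrated'' instances is also ill-defined: cost concentrated on one or two edges does not yield a bounded-size subinstance (one expensive edge can sit in an arbitrarily large tree), and ``solving the core optimally and combining'' is not justified -- Lemma~\ref{lm:decomposition_property} lets you add up local revenues, but the optimal \emph{local} revenue on a single expensive red edge is $0$, since a two-vertex subtree admits no blue edge. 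So the proposal correctly reduces the theorem but does not prove the reduced claim; the missing ingredients are the observation $r_\ell \ge c(T_\ell)-\min_{e\in E(T_\ell)}c(e)$ and the $\mu$-minimizing choice of root, which together replace your entire case analysis with a two-line averaging argument.
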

\begin{proof}
W.l.o.g., we can restrict ourselves to the case $n\geq \frac{7}{2\epsilon}+1$, since otherwise to find an optimal solution we can always use an exhaustive search algorithm that tries all the possible sets of blue edges and prices them at the optimum (remember this can be done in polynomial time \cite{Car07}). For each $v$, let $\mu(v)=\max_{u \mid (u,v)\in E(T)} c(u,v)$. We
root $T$ at a node $s$ minimizing $\mu$. Then we decompose $T$ using the algorithm given in Lemma
\ref{lm:decomposition_alg}, and we solve locally each $T_j$ with
$j \le \ell$. Let $r_j$ be the corresponding obtained revenue,  
and observe that $r_\ell \geq c(T_\ell)-\min_{e \in E(T_\ell)}c(e) \ge c(T_\ell) - \mu(s)$.

As Lemma~\ref{lm:decomposition_property} together with Lemmas~\ref{lm:star} and~\ref{lm:path} implies that the total revenue $r$ is at least $\sum_{j=1}^\ell r_j \ge \frac{4}{7} \left( c(T) - \mu(s) \right)$, and since $c(T) \ge \frac{1}{2}  \sum_{v \in V} \mu(v) \ge \frac{n}{2} \mu(s)$, we obtain

$$\frac{r^*}{r}\leq \frac{c(T)}{r} \le 7/4 + \frac{7}{2n-4}\leq 7/4+\epsilon.$$ \qed
\end{proof}

\section{\stack$(\gamma,\Delta)$ on trees of bounded radius}
In this section, we study the general \stack$(\gamma,\Delta)$.
First, we will argue that for this generalized version, the very same
approximation ratio as that of the original game can be achieved, since 
the \emph{single-price algorithm} defined in \cite{Car07}
can be easily adapted to provide an approximation of $\min\{k, 1 +
\ln \beta, 1 + \ln \rho\}$ for \stack$(\gamma,\Delta)$ as well,
where $k$ is the number of distinct red costs, $\beta$ is the
number of blue edges selected by the follower in an optimal
solution, and $\rho$ is the maximum ratio between red costs.
Then we focus on the case in which $T$ is a tree of radius $h$ (measured w.r.t. the number of edges) once rooted at its center.
For this case, we show that the problem remains \apx-hard even for
constant values of $h$, as well as approximable within a factor of
$2h+\epsilon$.

Let $k$ denote the number of distinct red costs, and let
$c_1<c_2< \dots < c_k$ denote these costs. To extend the single-price algorithm, we proceed as follows.
We consider the complete graph consisting of the union of the red
tree and all the potential blue edges. For each $j$ between $1$
and $k$, we set the price of every potential blue edge to $c_j$,
and we compute a spanning tree by a slightly modification of
Kruskal's algorithm as follows. In the phase in which the
algorithm considers all the edges of cost $c_j$, we break
tightness in favor of blue edges, and among the blue edges, we
prefer those with smaller activation cost. As soon as we consider
a blue edge exceeding the budget $\Delta$, we delete that edge and
all the remaining blue edges, and we go on with Kruskal's algorithm.
The solution for a given $j$ will be the set of all picked blue
edges which will be priced to $c_j$. Then we pick $j$ such that
the corresponding revenue is maximum, and we return the
corresponding solution. It turns out that the same analysis given
in \cite{Car07} can be applied here. Hence, we have:

\begin{theorem}
The above algorithm achieves an approximation ratio of $\min\{k, 1
+ \ln \beta, 1 + \ln \rho\}$ for \stack$(\gamma,\Delta)$.
\end{theorem}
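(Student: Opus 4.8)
The plan is to reduce the whole analysis to the one carried out for the unbudgeted single-price algorithm in \cite{Car07}, isolating the single point at which the activation costs $\gamma$ and the budget $\Delta$ enter. Fix a canonical optimal solution $(F^*,p^*)$ and let $M^*$ be the tree selected by the follower; as in \cite{Car07}, we may assume that every selected blue edge $e\in F^*\cap M^*$ is priced at the cost of the bottleneck red edge on its fundamental cycle in $M^*$, so that $p^*(e)\in\{c_1,\dots,c_k\}$. For each $j$ set $N_j=|\{e\in F^*\cap M^*: p^*(e)\ge c_j\}|$. First I would state and prove the \emph{core lemma}: the run of the algorithm with threshold price $c_j$ collects revenue at least $c_j\cdot N_j$. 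All three terms of the claimed minimum follow from this estimate by exactly the summations used in \cite{Car07}, none of which mentions $\gamma$ or $\Delta$: writing $\opt=\sum_j c_j\,m_j$ with $m_j$ the number of optimal blue edges priced \emph{exactly} $c_j$ gives $\opt\le k\cdot\max_j c_j N_j$ (the factor $k$); ordering the optimal prices as $p^*(b_1)\ge\cdots\ge p^*(b_\beta)$ and noting that at $c_j=p^*(b_i)$ we have $c_jN_j\ge i\,p^*(b_i)$ yields $p^*(b_i)\le P/i$ for $P:=\max_i i\,p^*(b_i)$, whence $\opt\le P\,H_\beta\le(1+\ln\beta)P$; the telescoping over the distinct costs gives the $(1+\ln\rho)$ bound. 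Thus only the core lemma requires a fresh argument.

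For the core lemma, fix $c_j$ and price every potential blue edge at $c_j$. Since the red edges form a tree, the set $R_{<}$ of red edges of cost strictly less than $c_j$ is a forest; the follower's (modified) Kruskal procedure first contracts the components of $R_{<}$, and on the resulting graph the blue edges form a graphic matroid $\mathcal M_j$ over which the tie-breaking in favour of blue edges makes the procedure fill the contracted graph with blue edges before resorting to red ones. Exactly as in \cite{Car07}, the $N_j$ optimal blue edges priced $\ge c_j$ form an independent set $\Phi$ of $\mathcal M_j$: they lie in the forest $M^*$ and each beats a red edge of cost $\ge c_j$. The one new point of bookkeeping is that $\Phi$ stays independent after the \emph{finer} contraction by $R_{<}$ rather than by $M^*\cap\{\text{cost}<c_j\}$; this holds because the components of $R_{<}$ refine those of $M^*\cap\{\text{cost}<c_j\}$ (any red edge of cost $<c_j$ not in $M^*$ has its endpoints joined in $M^*$ by a cost-$<c_j$ path), so the closure of $R_<$ is contained in that of $M^*\cap\{\text{cost}<c_j\}$ and contracting less cannot destroy the independence of $\Phi$. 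The genuinely new ingredient is that $\Phi$ is \emph{budget-feasible}: $\sum_{e\in\Phi}\gamma(e)\le\sum_{e\in F^*}\gamma(e)\le\Delta$, since $\Phi\subseteq F^*$.

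The heart of the argument, and the main obstacle, is therefore a purely matroidal statement: the greedy that scans the ground set of a matroid in nondecreasing weight order and adds an element whenever it preserves independence and keeps the running weight within $\Delta$ returns an independent set of \emph{maximum cardinality} among all budget-feasible independent sets. I would derive this from the classical prefix-optimality of the matroid greedy: if $g_1,g_2,\dots$ is the sequence selected when the budget is ignored, then for every $c$ the prefix $\{g_1,\dots,g_c\}$ is a minimum-weight independent set of size $c$. Consequently the length-$|\Phi|$ prefix has weight at most $\sum_{e\in\Phi}\gamma(e)\le\Delta$, so the budget-aware scan cannot halt before having picked $|\Phi|=N_j$ blue edges; all of them are priced $c_j$, giving revenue $\ge c_j N_j$ and establishing the core lemma. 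With the core lemma in hand, the $\min\{k,1+\ln\beta,1+\ln\rho\}$ bound follows by the unchanged accounting of \cite{Car07}. \qed
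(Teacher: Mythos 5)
Your proposal is correct and takes essentially the same approach as the paper, which gives no explicit argument beyond asserting that ``the same analysis given in \cite{Car07} can be applied here'': your core lemma (the run with threshold $c_j$ earns at least $c_j N_j$) is exactly that analysis, and the three bounds then follow by the unchanged summations. The single genuinely new step you isolate --- that the set of optimal blue edges priced at least $c_j$ is a budget-feasible independent set of the contracted graphic matroid, so by prefix-optimality of the matroid greedy (this is precisely why the algorithm scans blue edges in order of increasing activation cost) the budget-aware scan cannot halt before collecting $N_j$ blue edges --- is exactly the bookkeeping the paper leaves implicit, and you verify it correctly, including the closure-containment argument showing independence survives the contraction by all red edges of cost below $c_j$ rather than only those in the optimal tree.
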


We now study \stack$(\gamma,\Delta)$ when $T$ is a tree that once rooted at its center, say $v_0$, has height/radius $h$. First, we observe that the reduction the authors in~\cite{Car07} used to prove that $\stack$ is \apx-hard already when $T$ is a path can be modified to show the following:

\begin{theorem}
$\stack$ is \apx-hard even if $T$ is a star.
\end{theorem}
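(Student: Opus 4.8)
The plan is to give a gap-preserving reduction (in fact an $L$-reduction) from the same APX-hard source problem that underlies the path construction of \cite{Car07}, but to lay the gadget out on a star instead of a path. Concretely, I would build a red star with center $v_0$ and one leaf $\ell_e$ per ``element'' $e$ of the source instance, give the spoke $(v_0,\ell_e)$ a cost drawn from a small constant-size set of values (mirroring the two-weight setting $a<b$ of the hardness result in \cite{Car07}), and --- crucially, since we are in the general \stack\ game rather than in \stack$(0,0)$ --- include only a carefully chosen set $\mathcal{H}$ of blue (leaf--leaf) edges that encodes the incidence/conflict structure of the source instance. Because a star has radius $1$, establishing \apx-hardness here immediately yields it for every constant radius $h\ge 1$, as promised by the section.

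The key structural step is to characterise the leader's optimal revenue on a star. Using the fundamental-cycle (exchange) characterisation of minimum spanning trees, together with the fact recalled in the introduction and proved in \cite{Car07} that once the set of purchased blue edges is fixed the best pricing is computable in polynomial time, I would show that a blue edge $(\ell_u,\ell_v)$ can enter the follower's tree only by displacing the \emph{more expensive} of its two incident spokes, and that its price is then at most that larger spoke cost. Hence a feasible selection of blue edges corresponds to a system of \emph{distinct} captured spokes in which every chosen edge is charged to its higher-cost endpoint, and the revenue equals $c(T)$ minus the total cost of the \emph{uncaptured} spokes. Maximising revenue therefore reduces to a purely combinatorial ``uphill capture'' problem on $\mathcal{H}$: select a sub-forest of $\mathcal{H}$ and orient it towards the larger-cost endpoints so that each leaf is captured at most once and the captured edges, together with the retained spokes, form a spanning tree. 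This is exactly the role played by the red path-edges and their fundamental-cycle conflicts in the construction of \cite{Car07}, so the two instances can be put into correspondence element by element.

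With this characterisation in hand, completeness and soundness should be routine: a good solution of the source instance yields a blue forest capturing all but a small-cost set of spokes, and conversely any high-revenue leader strategy induces, via the capture assignment, a good source solution, the two optima differing by an additive term linear in the instance size. The step I expect to be the main obstacle is the soundness (upper-bound) direction, because a star is far more permissive than a path: every pair of leaves is a candidate blue edge and every fundamental cycle has length three, so a priori the leader gains flexibility and the loss could shrink below the path value, collapsing the gap. The heart of the argument is therefore to prove that the distinct-representative constraint forced by MST-consistency --- each blue edge must be charged to the larger (by cost) of its two endpoints, and no spoke may be charged twice --- limits the set of simultaneously capturable spokes \emph{precisely} as the source instance dictates, so that the apparent extra freedom of the star is illusory. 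Once this upper bound is shown to match the value produced in the completeness direction, the reduction goes through and \stack\ is \apx-hard already when $T$ is a star.
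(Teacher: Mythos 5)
There is a genuine gap, and it sits exactly where you predicted: the structural characterization of optimal revenue on a star that your whole reduction rests on is false as stated. By Equation~(\ref{eq: optimal pricing for set of leader's edges}), the optimal price of a purchased blue edge is the minimum over \emph{all} cycles of $(V(T),E(T)\cup F)$ of the maximum red cost on the cycle, not just over its fundamental triangle through the center $v_0$; any two blue edges sharing a leaf create a four-cycle through $v_0$ whose red edges are the two \emph{outer} spokes, and this collapses prices. Concretely, take spokes $c(v_0,x)=c(v_0,z)=1$, $c(v_0,y)=5$ and $F=\{(x,y),(y,z)\}$: the cycle $x,y,z,v_0$ caps both prices at $\max\{c(v_0,x),c(v_0,z)\}=1$, so $r(F)=2$, whereas your capture model (each edge charged to its more expensive endpoint, captures distinct, revenue $=c(T)$ minus the uncaptured spokes) predicts $7-1-1=5$. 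So it is not true that a blue edge always ``displaces the more expensive of its two incident spokes,'' and revenue does not equal $c(T)$ minus the uncaptured spoke costs; the exact correspondence between leader strategies and systems of distinct representatives holds essentially only for blue forests of stars centered at cheap endpoints. Since you explicitly defer the soundness (upper-bound) direction to precisely this characterization, the proposal as written does not go through: the extra permissiveness of the star is real at the level of your lemma and cannot be dismissed by a general star argument.

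For comparison, the paper never proves a general characterization of optimal star revenue; it sidesteps the issue with an instance-specific, revenue-preserving translation. It takes the concrete \apx-hard \emph{path} instances of \cite{Car07} (built from Set Cover: vertices $u_1,\dots,u_\ell,S_1,\dots,S_t$, cost $2$ on $(u_\ell,S_1)$ and on each $(S_i,S_{i+1})$, cost $1$ elsewhere, blue edges the incidences $(u_i,S_j)$ with $u_i\in S_j$), adds a center $v_0$, and replaces the path by the star with $c'(v_0,S_i)=2$ and $c'(v_0,u_j)=1$, keeping the \emph{same} blue edge set. The entire proof is then one observation: for every acyclic $F\subseteq B$, the edge $(S_i,u_j)\in F$ is priced $2$ in \emph{both} instances if and only if it is the only $F$-edge incident to $S_i$ (in the star, any cycle through such an edge must use the spoke $(v_0,S_i)$ of cost $2$; with a second $F$-edge at $S_i$, the four-cycle through $v_0$ caps the price at $1$), and priced $1$ otherwise; hence every $F$ yields identical revenue in the two instances and the hardness of \cite{Car07} transfers verbatim. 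Your construction --- incidence blue set, two spoke costs --- essentially rebuilds this very instance, so the repair is clear: drop the false general capture lemma, restrict attention to these instances, verify via Equation~(\ref{eq: optimal pricing for set of leader's edges}) the identity $r(F)=|F|+|\{i:\deg_F(S_i)=1\}|$ in both the path and the star, and inherit the gap analysis of \cite{Car07} rather than re-proving completeness and soundness from scratch. (You are right, though, that restricting the blue edge set is legitimate here, since the theorem concerns $\stack$ and not $\stack(0,0)$, and that hardness for radius $1$ implies it for every constant radius.)
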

\begin{proof}
We show an  approximation-preserving  reduction from $\stack$ for the case in which $T$ is a path to $\stack$ for the case in which $T$ is a star. Our reduction works only for the hard instances constructed in~\cite{Car07}.

The hard instances given in~\cite{Car07} are constructed from instances of the {\em Set Cover Problem} in the following way. Let $U=\{u_1,\dots,u_\ell\}$ be a set of objects and let $\{S_1,\dots,S_t\}$ be a set of subsets of $U$ such that $u_\ell \in S_i$, for every $i=1,\dots,t$. $T$ is a path of $\ell+t$ vertices $\{u_1,\dots,u_\ell\}\cup\{S_1,\dots,S_t\}$ with edge set $E(T)=\{(u_i,u_{i+1})\mid i=1,\dots,\ell-1\}\cup\{(S_i,S_{i+1})\mid i=1,\dots,t-1\}\cup\{(u_\ell,S_1)\}$. The fixed cost $c(e)$ of an edge $e \in E(T)$ is $2$ if $e=(S_i,S_{i+1})$ or $e=(u_\ell,S_1)$, 1 otherwise. Let $B=\big\{(u_i,S_j)\mid u_i \in S_j, i=1,\dots,\ell, j=1,\dots,t\big\}$ be the set of blue edges.

Our reduction works as follows. We take the above hard instance for $\stack$ on red paths, we add a vertex $v_0$ and we replace the red tree $T$ by a star of red edges centered at $v_0$. Let $T'$ denote the star of red edges. The fixed cost $c'(e)$ of an edge $e$ is 2 if $e=(v_0,S_i)$, 1 otherwise. First observe that for every $F\subseteq B$, $(V(T),F)$ is acyclic iff $(V(T'),F)$ is acyclic. Let $F\subseteq B$ be a set of blue edges such that $(V(T),F)$ is acyclic. The revenue yielded by $F$ in both instances of $\stack$ is the same, as the price of an
edge $(S_i,u_j)\in F$ in both instances is 2 iff $(S_i,u_j)$ is the only edge in $F$ which is incident to $S_i$. The claim follows.
\qed
\end{proof}

In the remaining of the section we will show the existence of a $(2h+\epsilon)$-approximation algorithm. 
Before starting, recall that once that a set $F$ of activated edges is part of the final MST, then the optimal pricing for each $e \in F$ can be computed in polynomial time, as observed in (see~\cite{Car07}). More precisely, this can be done by computing efficiently 
\begin{equation}\label{eq: optimal pricing for set of leader's edges}
p_{F}(e):= \min_{H \in \cycle(F,e)} \ \max_{e' \in E(H)\cap E(T)} c(e')
\end{equation}
\noindent where $\texttt{cycle}(F,e)$ is the set of (simple) cycles containing edge $e$ in the graph \linebreak $(V(T), E(T) \cup F)$.
With a little abuse of notation, in the following 
we will denote by $r(F)$ the revenue yielded by the above optimal pricing $p_{F}$.

The main idea of the algorithm is to reduce the problem instance to $h$ instances in which the red trees are stars. With a little abuse of notation, in each of the $h$ instances, the leader is sometimes allowed to activate edges which are parallel to red edges. 
We denote by $V_i=\{v_1,\dots,v_{\ell_i}\}$ the set of vertices at level $i$ in $T$, and by $E_i$ the set of edges in $T$ going from vertices in $V_i$ to their parents. Let $T_i$ be a red star obtained by identifying all red edges in $T$ but those in $E_i$. With a little abuse of notation, when edge $(u,v)$ is identified, and w.l.o.g. $u$ is the parent of $v$, we assume that the corresponding vertex is labeled with $u$. Thus, according to this assumption, we have that $T_i$ is a star centered at $v_0$ with $v_1,\dots,v_{\ell_i}$ as leaves. The cost of a red edge $(v_0,v)$ in $T_i$ is $c_i(v_0,v)=c(u,v)$, where $u$ is the parent of $v$ in $T$. Let $\hat T_0,\hat T_1,\dots, \hat T_{\ell_i}$ be the connected components in $T-E_i$. W.l.o.g., assume $v_i \in V(\hat T_i)$. Let $e_{j,q}$ be a blue edge connecting $\hat T_j$ and $\hat T_q$ with cheapest activation cost. Let ${\tt blue}_i:=\{e_{j,q}\mid j,q=0,\dots,\ell_i, j \neq q\}$. Notice that this set contains edges that the leader can activate in the original instance of the problem. We now map them to their counterpart in $T_i$, namely let $B_i:=\{\bar e_{j,q}:=(v_j,v_q)\mid e_{j,q} \in {\tt blue}_i\}$ be the set of blue edges the leader is allowed to activate in $T_i$. The activation cost of an edge $\bar e_{j,q}\in B_i$ is $\gamma_i(\bar e_{j,q}):=\gamma(e_{j,q})$. The auxiliary instance corresponding to level $i$ is shown in Figure \ref{fig:2_h_approx}.

\begin{figure}[t]
\begin{center}
\includegraphics[scale=0.9]{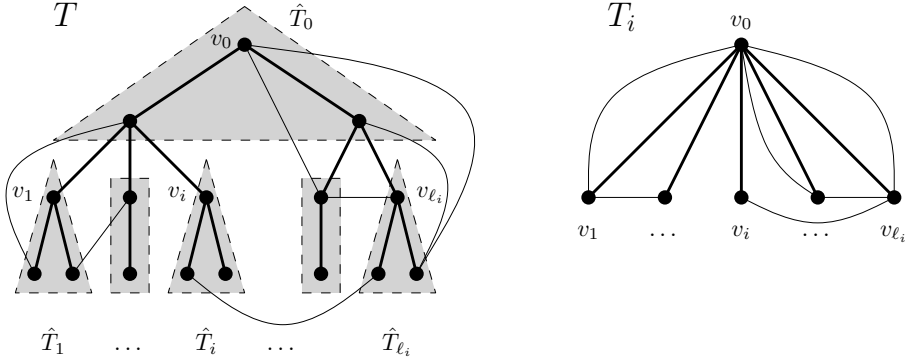}
\end{center}
\caption{Auxiliary instance for \stack$(\gamma,\Delta)$ corresponding to level $i$. Vertices in the tree $\hat{T}_q$ are identified. When two or more edges between the same two trees are present, only the one with smallest activation cost is preserved.}
\label{fig:2_h_approx}
\end{figure}

Let $F^*$ be an optimal solution for the leader on input instance $T$ and let \linebreak $F^*_i:=\{(v_j,v_q) \in B_i \mid (u,v)\in F^*, u \in V(\hat T_j),v \in V(\hat T_q), j \neq q\}$ be the corresponding edges in $T_i$. Let $G^*_i:=(\{v_0,\dots,v_{\ell_i}\},F^*_i)$, and denote by $\comp(G^*_i)$ the set of connected components of $G^*_i$. We start by proving an upper bound on the revenue yielded by $F^*$.
\begin{lemma}\label{lm: upper bound on the leader's revenue}
$\displaystyle{r(F^*)\leq c(T)-\sum_{i=1}^{h} \sum_{H \in \comp(G^*_i)} \min_{v \in V(H)}c_i(v_0,v).}$\footnote{With a slight abuse of notation, we assume $c_i(v_0,v_0)=0$.}
\end{lemma}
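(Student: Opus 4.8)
The plan is to bound $r(F^*)$ from above by exhibiting, for every level $i$, a set of red edges of $E_i$ that the follower is \emph{forced} to retain in the minimum spanning tree $M$ he computes w.r.t.\ the optimal pricing $p_{F^*}$. The starting point is the standard consequence of the bottleneck pricing~(\ref{eq: optimal pricing for set of leader's edges}): when the leader prices optimally, each blue edge selected by the follower displaces a distinct red edge of $T$ whose cost is at least the price charged on that blue edge. Consequently the revenue cannot exceed the total cost of the red edges \emph{not} lying in $M$, i.e.\ $r(F^*) \le c(T) - \sum_{e \in E(T) \cap E(M)} c(e)$. It therefore suffices to show that the cost of the retained red edges is at least $\sum_{i=1}^{h} \sum_{H \in \comp(G^*_i)} \min_{v \in V(H)} c_i(v_0,v)$.

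Next I would exploit the fact that the edge sets $E_1,\dots,E_h$ partition $E(T)$ (each red edge joins a vertex to its parent and belongs to the level of its lower endpoint), so that it is enough to prove the bound level by level and then sum, the contributions ranging over disjoint edge sets. Fix a level $i$ and recall that removing $E_i$ splits $T$ into the components $\hat T_0,\dots,\hat T_{\ell_i}$, where $\hat T_0$ contains the root $v_0$ and each $\hat T_j$ with $j \ge 1$ hangs below exactly one edge of $E_i$ of cost $c_i(v_0,v_j)$. For every connected component $H$ of $G^*_i$ with $v_0 \notin V(H)$, I would consider the group of subtrees $\{\hat T_j : v_j \in V(H)\}$ and argue that at least one of the red edges of $E_i$ incident to this group survives in $M$, so that the retained cost attributable to $H$ is at least $\min_{v \in V(H)} c_i(v_0,v)$.

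The heart of the argument, and the step I expect to be most delicate, is this connectivity claim. The only red edges leaving a subtree $\hat T_j$ are its single $E_i$ edge at the root, so the sole alternative way to connect the group to the rest of the graph is through blue edges of $F^*$; but any blue edge of $F^*$ leaving the group projects, by the definition of $F^*_i$, to an edge of $G^*_i$ incident to $V(H)$ and pointing outside $V(H)$, contradicting the fact that $H$ is a full connected component. Since $M$ is a spanning tree, at least one edge of $E_i$ incident to the group must then be retained, and its cost is at least $\min_{v \in V(H)} c_i(v_0,v)$. For the unique component containing $v_0$ the corresponding term vanishes by the convention $c_i(v_0,v_0)=0$, so it may safely be ignored.

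Summing over $\comp(G^*_i)$ bounds the retained red cost at level $i$ from below by $\sum_{H \in \comp(G^*_i)} \min_{v \in V(H)} c_i(v_0,v)$; summing over $i=1,\dots,h$ and using the disjointness of the $E_i$ then yields the required lower bound on the retained red cost, and hence the stated upper bound on $r(F^*)$. Beyond the connectivity claim, the only points needing care are verifying that displaced red edges can be charged injectively to the selected blue edges (so that the revenue indeed does not exceed the cost of the non-retained red edges), and checking that the edge-identification conventions used to build $T_i$ and $B_i$ do not introduce spurious connections across distinct components of $G^*_i$.
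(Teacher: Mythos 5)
Your proposal is correct and takes essentially the same route as the paper: the paper's proof likewise observes that for every component $H \in \comp(G^*_i)$ not containing $v_0$, at least one red edge $(u,v)$ with $u$ the parent of $v$ and $c_i(v_0,v)=c(u,v)$ must belong to any MST of $(V(T),E(T)\cup F^*)$ (it phrases the cut argument in the auxiliary star instance $T_i$ and lifts it back, while you argue directly on the grouped subtrees $\hat T_j$), and then sums over components and levels. Your explicit justifications of the revenue bound $r(F^*)\le c(T)-\sum_{e\in E(T)\cap M}c(e)$ (which also follows immediately from $w(M)\le c(T)$, since $T$ is itself a spanning tree) and of the injectivity of the charging across components and levels merely make precise what the paper leaves implicit.
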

\begin{proof}
Observe that for every $H\in \comp(G^*_i)$ not containing vertex $v_0$, at least one red edge $(v_0,v)$, for some $v \in V(H)$, has to be contained in any MST of $(V(T_i),E(T_i)\cup F^*_i)$. Thus, for some $v \in V(H)$, at least one edge $(u,v)$ where $u$ is the parent of $v$ in $T$ has to be contained in any MST of $G=(V(T),E(T)\cup F^*)$. As $c_i(v_0,v)=c(u,v)$, the claim follows by summing over all components $H \in \comp(G^*_i)$ for all $i$'s.
\qed
\end{proof}

The key idea of our algorithm is to find a set $F$ of blue edges whose overall activation cost does not exceed the budget, and such that $(V,F)$ is a forest of stars. More precisely, for every $i=1,\dots,h$, the algorithm first finds a set $\hat F_i \subseteq B_i$ such that $\sum_{e \in \hat F_i}\gamma_i(e)\leq \Delta$ and $\hat G_i:=(V(T_i),\hat F_i)$ is a forest of stars; then, it considers the set $F_i:=\{e_{j,q}\mid \bar e_{j,q} \in \hat F_i\}$ of the corresponding blue edges for the original instance. Observe that (i) $G_i:=(V(T),F_i)$ is still a forest of stars and (ii) the overall activation cost of the edges in $F_i$ equals that of the edges in $\hat F_i$. Furthermore, using Equation~(\ref{eq: optimal pricing for set of leader's edges}), we can derive the following lemma, which claims that when we map $\hat{F}_i$ back to $F_i$ the obtained revenue cannot decrease:

\begin{lemma}\label{lm: converted solution}
$r(F_i)\geq r(\hat F_i)$.
\end{lemma}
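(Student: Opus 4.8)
The plan is to compare the revenue $r(F_i)$ obtained by pricing the edges of $F_i$ in the original tree $T$ against the revenue $r(\hat F_i)$ obtained by pricing the corresponding star edges $\hat F_i$ in the auxiliary star instance $T_i$. Both revenues are computed via the optimal pricing formula in Equation~(\ref{eq: optimal pricing for set of leader's edges}), so it suffices to argue edge-by-edge that each activated edge earns at least as much in the original instance as its counterpart does in the star instance. Concretely, I would fix an edge $\bar e_{j,q}=(v_j,v_q)\in\hat F_i$ together with its original counterpart $e_{j,q}\in F_i$, and show that $p_{F_i}(e_{j,q})\ge p_{\hat F_i}(\bar e_{j,q})$; summing over all activated edges then yields $r(F_i)\ge r(\hat F_i)$.

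The heart of the argument is understanding how the optimal price of an edge is determined. By Equation~(\ref{eq: optimal pricing for set of leader's edges}), $p_{\hat F_i}(\bar e_{j,q})$ is the minimum, over all simple cycles through $\bar e_{j,q}$ in $(V(T_i),E(T_i)\cup\hat F_i)$, of the maximum red-edge cost on that cycle. Since $\hat G_i$ is a forest of stars, the cycles available in the star instance are quite restricted. The key step is to produce, for each such cycle $H$ in the auxiliary instance realizing (or bounding) the price of $\bar e_{j,q}$, a corresponding cycle $H'$ through $e_{j,q}$ in the original graph $(V(T),E(T)\cup F_i)$ whose maximum red cost is at least as large. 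The crucial observation here is the identity $c_i(v_0,v)=c(u,v)$ built into the construction of $T_i$, where $u$ is the parent of $v$ in $T$: every red edge $(v_0,v)$ of the star carries exactly the cost of the corresponding red edge of $T$ lying at level $i$. Because the bijection between $F_i$ and $\hat F_i$ identifies blue edges across components $\hat T_j,\hat T_q$, a cycle in the star that uses red spokes $(v_0,v_j)$ and $(v_0,v_q)$ together with a blue edge lifts to a cycle in $T$ that traverses the corresponding level-$i$ red edges plus paths inside the contracted components $\hat T_j,\hat T_q$, and the original cycle can only contain additional red edges, never fewer, so its maximum red cost does not decrease.

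I would therefore make the lifting explicit: given the minimizing cycle $H$ for $\bar e_{j,q}$ in the star, replace each red spoke $(v_0,v_s)$ it uses by the level-$i$ red edge $(u_s,v_s)$ it was identified from, and replace the contracted vertex $v_s$ by an appropriate path through $\hat T_s$ connecting the endpoints needed to close the cycle in $T$. Since all edges inside the components $\hat T_j$ are present in $E(T)$, this closes into a genuine simple cycle $H'$ through $e_{j,q}$, and $\max_{e'\in E(H')\cap E(T)}c(e')\ge\max_{e'\in E(H)\cap E(T_i)}c_i(e')$ because the set of red costs on $H'$ contains those on $H$. Taking the minimum over cycles on the left-hand side can only lower the price, but since this inequality holds for \emph{every} cycle lifted from a star cycle through $\bar e_{j,q}$, and in particular for the one achieving $p_{\hat F_i}(\bar e_{j,q})$, we obtain $p_{F_i}(e_{j,q})\le\max_{e'}c(e')$ on the lifted cycle, giving $p_{F_i}(e_{j,q})\ge p_{\hat F_i}(\bar e_{j,q})$ after correctly tracking the direction of the min/max.

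The step I expect to be the main obstacle is precisely this lifting correspondence between cycles, and in particular verifying that the minimizing cycle realizing the optimal star price maps to a \emph{valid simple} cycle in $T$ that passes through $e_{j,q}$ — one must check that the paths chosen inside each contracted component $\hat T_s$ are consistent with the endpoints of $e_{j,q}$ in $T$ (recall $e_{j,q}=(u,v)$ with $u\in V(\hat T_j)$, $v\in V(\hat T_q)$) and do not accidentally share vertices or create chords that shortcut the cycle to a cheaper one. A careful bookkeeping of which endpoint of the original blue edge lies in which component, together with the fact that the trees $\hat T_s$ are vertex-disjoint, should resolve these concerns, but it is where the abuse of notation in the identification step needs to be handled rigorously.
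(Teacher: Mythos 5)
Your overall skeleton --- reducing the lemma to the per-edge inequality $p_{F_i}(e_{j,q})\ge p_{\hat F_i}(\bar e_{j,q})$ via Equation~(\ref{eq: optimal pricing for set of leader's edges}) and summing over the activated edges (both $\hat F_i$ and $F_i$ are forests of stars, so every edge priced this way is indeed retained in the MST) --- is exactly the intended route; the paper treats the lemma as a direct consequence of Equation~(\ref{eq: optimal pricing for set of leader's edges}). However, your core step runs in the wrong direction, and the conclusion you draw from it is a non sequitur. Lifting a star cycle $H$ through $\bar e_{j,q}$ to an original cycle $H'$ through $e_{j,q}$ with $\max_{e'\in E(H')\cap E(T)} c(e') \ge \max_{e'\in E(H)\cap E(T_i)} c_i(e')$ yields only two \emph{upper} bounds: $p_{F_i}(e_{j,q}) \le \max_{e'\in E(H')\cap E(T)} c(e')$, because $H'$ is merely one competitor in the outer minimum defining $p_{F_i}$, and likewise $p_{\hat F_i}(\bar e_{j,q}) \le \max_{e'\in E(H')\cap E(T)} c(e')$. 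From two quantities bounded above by the same value you cannot conclude $p_{F_i}(e_{j,q}) \ge p_{\hat F_i}(\bar e_{j,q})$, no matter how carefully you ``track the direction of the min/max''. Worse, the extra red edges picked up inside the components $\hat T_s$ can make the lifted cycle strictly more expensive than the star cycle, so lifting even the star-minimizing cycle genuinely tells you nothing useful about $p_{F_i}$.

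The correct mechanism is the opposite map: \emph{project} (contract) rather than lift. Take any simple cycle $H'\in\cycle(F_i,e_{j,q})$ in $(V(T),E(T)\cup F_i)$ and contract each component $\hat T_s$ to its star vertex: red edges interior to a component vanish, each red edge of $E_i$ on $H'$ becomes a spoke $(v_0,v_s)$ of equal cost $c_i(v_0,v_s)=c(u_s,v_s)$, and each blue edge $e_{s,t}\in F_i$ becomes $\bar e_{s,t}\in\hat F_i$. This produces a closed walk in $(V(T_i),E(T_i)\cup\hat F_i)$ traversing $\bar e_{j,q}$ exactly once, from which one extracts a simple cycle $H\in\cycle(\hat F_i,\bar e_{j,q})$ whose red-edge costs form a subset of those of $H'$; hence the maximum red cost on $H$ is at most that on $H'$, so $p_{\hat F_i}(\bar e_{j,q})\le \max_{e'\in E(H')\cap E(T)}c(e')$, and minimizing over all $H'$ gives $p_{\hat F_i}(\bar e_{j,q})\le p_{F_i}(e_{j,q})$, which sums to $r(F_i)\ge r(\hat F_i)$. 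Note that this also dissolves the obstacle you flagged: in the projection direction, simplicity is handled by the standard extraction of a simple cycle through a once-traversed edge, whereas in your lifting direction the simplicity bookkeeping, while doable, is beside the point because the inequality it produces is not the one the lemma needs.
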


We now give a lower bound of the revenue that can be obtained from $\hat{F}_i$. The bound trivially follows from (\ref{eq: optimal pricing for set of leader's edges}):

\begin{lemma}\label{lm: revenue of a blue star}
Let $L_i:=\{v \mid v \in V(T_i), \text{$v$ is a leaf of some star in $\hat G_i$}\}$.\footnote{If a star contains only one edge, then let any of its vertices to be a leaf.} Then, $r(\hat F_i)\geq \sum_{v\in L_i}c_i(v_0,v)$.
\end{lemma}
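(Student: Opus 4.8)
The plan is to lower bound the revenue of $\hat F_i$ by the sum $\sum_{v \in L_i} c_i(v_0, v)$, where $L_i$ is the set of vertices that are leaves of some star in the forest $\hat G_i$. Since $\hat G_i = (V(T_i), \hat F_i)$ is a forest of stars in the auxiliary instance $T_i$ (a red star centered at $v_0$), each blue edge $\bar e = (v_j, v_q)$ of $\hat F_i$ connects two leaves of $T_i$, or a leaf to $v_0$. The key observation is that the optimal pricing formula~(\ref{eq: optimal pricing for set of leader's edges}) already gives a closed form for the revenue obtainable from any activated set, so I do not need to reason about the follower's MST computation directly; I only need to exhibit, for each blue edge, a cycle whose maximum red-edge cost is large enough.

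The main step is to analyze a single star component of $\hat G_i$ and argue that its pricing can recover at least the cost $c_i(v_0, v)$ of each of its leaves $v$. First I would consider a star with center $v_c$ (a vertex of $T_i$, possibly $v_0$ itself) and leaves $w_1, \dots, w_d$. For each leaf $w$, the blue edge $(v_c, w)$ together with the red edges $(v_0, v_c)$ and $(v_0, w)$ of $T_i$ forms a cycle in $(V(T_i), E(T_i) \cup \hat F_i)$ (when $v_c = v_0$ the cycle is just the blue edge with the single red edge $(v_0, w)$). By~(\ref{eq: optimal pricing for set of leader's edges}), the price $p_{\hat F_i}(v_c, w)$ is at least $\min\{c_i(v_0, v_c), c_i(v_0, w)\}$ in general, but the point is that we designate exactly one endpoint of each blue edge as its contributing leaf so that the red edge incident to that leaf is present in some chosen cycle, yielding a revenue contribution of at least $c_i(v_0, w)$ for each leaf $w \in L_i$. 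Summing over all stars and all their leaves gives the claimed bound.

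Concretely, I would justify the per-edge contribution by noting that for a leaf $v \in L_i$, there is a blue edge of $\hat F_i$ incident to $v$, and the red edge $(v_0, v)$ of $T_i$ lies on a cycle through that blue edge; hence the follower, when forced to include that blue edge, must drop a red edge of cost at least $c_i(v_0, v)$ from the spanning tree, so pricing that blue edge at $c_i(v_0, v)$ keeps it in the MST and collects that amount. Because distinct leaves correspond to distinct red edges $(v_0, v)$ of the star $T_i$ and these red edges are edge-disjoint, the revenues add up without double counting, giving $r(\hat F_i) \ge \sum_{v \in L_i} c_i(v_0, v)$.

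The part requiring the most care is the bookkeeping that ensures no red edge is counted twice across different blue edges or different stars: since $\hat G_i$ is a \emph{forest of stars}, its leaf set $L_i$ assigns each star-leaf to a unique red edge of $T_i$, and the footnote convention (for a single-edge star, pick one endpoint as the leaf) guarantees that even degenerate stars contribute a well-defined single red edge. Given this, the bound is essentially immediate from the pricing formula, which is exactly why the excerpt states it "trivially follows from~(\ref{eq: optimal pricing for set of leader's edges})"; the only genuine content is verifying that the designated leaf of each star edge indeed admits a cycle whose heaviest red edge has cost at least $c_i(v_0, v)$.
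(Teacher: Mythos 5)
Your overall plan is the right one --- the paper itself offers no written proof, asserting the bound ``trivially follows'' from Equation~(\ref{eq: optimal pricing for set of leader's edges}) --- and your bookkeeping about distinct leaves, distinct blue edges, and no double counting is fine. But the one step that carries all the content is justified with the wrong quantifier. The price $p_{\hat F_i}(e)=\min_{H\in\cycle(\hat F_i,e)}\max_{e'\in E(H)\cap E(T_i)}c_i(e')$ is a \emph{minimum} over all simple cycles through $e$, so exhibiting \emph{some} cycle containing the red edge $(v_0,v)$ --- which is what you verify via the triangle, and what your closing sentence explicitly declares to be ``the only genuine content'' --- lower-bounds only that one term of the minimum; the minimum itself could in principle be attained by a different cycle avoiding $(v_0,v)$, e.g.\ one routed through other blue edges of the same star. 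Your alternative MST phrasing has the same defect: forcing the blue edge $(v_c,v)$ into a spanning tree does not by itself force the follower to ``drop a red edge of cost at least $c_i(v_0,v)$'', since a spanning tree may well contain both $(v_c,v)$ and $(v_0,v)$.

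The repair is short, and it uses precisely the structural fact you invoke only for bookkeeping. Because $v\in L_i$ is a leaf of its star in $\hat G_i$, the edge $e=(v_c,v)$ is the \emph{only} blue edge incident to $v$, and because $T_i$ is a star, $(v_0,v)$ is the only red edge incident to $v$ (this also covers the case $v_c=v_0$, where the blue and red edges are parallel). Hence $v$ has degree $2$ in $(V(T_i),E(T_i)\cup\hat F_i)$, so \emph{every} simple cycle through $e$ must leave $v$ via $(v_0,v)$; consequently every term of the minimum is at least $c_i(v_0,v)$, giving $p_{\hat F_i}(e)\ge c_i(v_0,v)$ rather than merely the weaker $\min\{c_i(v_0,v_c),c_i(v_0,v)\}$ you state. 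Since distinct leaves of $L_i$ have distinct incident blue edges, and the convention $c_i(v_0,v_0)=0$ makes the degenerate case where $v_0$ is designated a leaf harmless, summing over $v\in L_i$ yields $r(\hat F_i)\ge\sum_{v\in L_i}c_i(v_0,v)$. With this one quantifier corrected, your argument is complete and is exactly the expansion of the one-line proof the paper leaves implicit.
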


Next lemma essentially shows that there exists a solution for $T_i$ which is a forest of stars yielding a revenue of at lest a half of the optimal revenue for $T_i$.

\begin{lemma}\label{lm: decomposition of cycle trees}
Let $B' \subseteq B_i$ and let $U=\{v \mid\text{$v$ is an endvertex of some edge in $B'$}\}$. There exists a polynomial time algorithm that finds two sets $F^1$ and $F^2$ such that (i) $F^1,F^2\subseteq B'$,
(ii) both $(V(T_i),F^1)$ and $(V(T_i),F^2)$ are forests of stars, and (iii) $r(F^1)+r(F^2)\geq \sum_{v \in U}c_i(v_0,v)$.
\end{lemma}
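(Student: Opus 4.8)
The plan is to work entirely inside the auxiliary star instance $T_i$, to shrink $B'$ to a spanning forest, and to split that forest into two forests of stars by a depth-parity argument. The engine of the whole proof is a single observation about the pricing formula~(\ref{eq: optimal pricing for set of leader's edges}): if $S\subseteq B_i$ is a star of blue edges with center $w$ and leaf $u\neq v_0$, then any simple cycle of $(V(T_i),E(T_i)\cup S)$ through the edge $(w,u)$ must leave $u$ along the unique \emph{other} edge incident to $u$, namely the red spoke $(v_0,u)$, since the only blue edge at a leaf is the one to the center. Hence every such cycle $H$ satisfies $\max_{e'\in E(H)\cap E(T_i)}c_i(e')\ge c_i(v_0,u)$, so $p_S(w,u)\ge c_i(v_0,u)$, and summing over the leaves yields the crucial lower bound $r(S)\ge\sum_{u\text{ leaf of }S}c_i(v_0,u)$, \emph{irrespective of which endpoint is chosen as center}. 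This is precisely what frees the construction from having to optimize center choices: I only need each vertex of $U$ to appear as a leaf of \emph{some} star.

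Next I would build the decomposition. Let $\mathcal F$ be a spanning forest of $(U,B')$; note that $v_0$ contributes $0$ to the target sum by the convention $c_i(v_0,v_0)=0$, so it can be disregarded. Root each tree of $\mathcal F$ and let $d(\cdot)$ be the depth. Every tree edge joins a vertex to its parent, i.e.\ an even-depth and an odd-depth vertex. I put into $F^1$ all edges whose parent endpoint has even depth, viewing each such even vertex as the center of a star with its (odd) children as leaves, and into $F^2$ all edges whose parent endpoint has odd depth, with the odd centers and their (even) children as leaves. Each of $F^1,F^2$ is a forest of stars, because in a rooted tree every non-root vertex has a unique parent, so no vertex is simultaneously a leaf of two distinct stars of the same forest; moreover the roles (center/leaf) are disjoint by parity. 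Crucially, the lemma only requires $F^1,F^2\subseteq B'$ and not disjointness, so edge reuse is permitted.

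Finally I would patch the only vertices left out. After the split, every odd-depth vertex is a leaf in $F^1$ and every even-depth vertex of depth at least $2$ is a leaf in $F^2$; the sole uncovered vertex of each tree is its root $\rho$. To cover it, pick any child $\chi$ of $\rho$ and add the already available edge $(\rho,\chi)$ to $F^2$, declaring $\chi$ the center and $\rho$ a leaf. Since $\rho$ had no incident edge in $F^2$ (all edges from $\rho$ to its children live in $F^1$), this keeps $F^2$ a forest of stars while making $\rho$ a leaf. Now every vertex of $U\setminus\{v_0\}$ is a leaf of a star in $F^1$ or $F^2$, so by the first paragraph $r(F^1)+r(F^2)\ge\sum_{v\in U}c_i(v_0,v)$, which is (iii); both sets are subsets of $B'$ and forests of stars, giving (i) and (ii); and spanning forest, rooting, and parity colouring are all polynomial, giving the time bound.

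The main obstacle is conceptual rather than computational: a naive parity decomposition leaves every root uncovered, and one is tempted to force each star's center to be its cheapest vertex to maximize revenue, which conflicts with the rigid even/odd assignment of centers. Both difficulties dissolve at once thanks to the two flexibilities identified above, namely that $p_S(w,u)\ge c_i(v_0,u)$ holds at every leaf \emph{independently of the center}, so center choice is irrelevant, and that $F^1$ and $F^2$ may \emph{share} edges, so the root can be reused as a leaf in the opposite forest at no structural cost. I would therefore expect the only delicate point in the write-up to be the careful verification that adding $(\rho,\chi)$ to $F^2$ preserves the forest-of-stars property, which follows from $\rho$ being $F^2$-isolated before the patch.
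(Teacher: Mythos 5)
Your proof is correct and takes essentially the same route as the paper's: the paper likewise picks a spanning tree of each connected component of the graph induced by $B'$, uses its bipartition (your even/odd depth classes are exactly the paper's $V^j_1,V^j_2$) to extract two star forests with centers on one side and leaves on the other, and concludes via the per-leaf bound $r(\hat F)\ge \sum_{v\in L} c_i(v_0,v)$ of Lemma~\ref{lm: revenue of a blue star}, which you correctly re-derive inline from Equation~(\ref{eq: optimal pricing for set of leader's edges}). Your explicit root patch (adding $(\rho,\chi)$ to $F^2$) is just the rooted-tree instantiation of the paper's observation that, by connectivity of the spanning tree, every vertex of $V^j_\ell$ is adjacent to some vertex of $V^j_{3-\ell}$, so no genuine difference remains.
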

\begin{proof}
Let $D$ be the graph induced by edge set $B'$. Let $D^j$ be any of the $t$ connected components in $D$, and let $T^j$ be any spanning tree in $D^j$. As $T^j$ is a bipartite graph, it is possible to partition the set of its vertices into two sets $V^j_1$ and $V^j_2$ in polynomial time. Moreover, by the connectivity of $T^j$, every vertex $v \in V^j_\ell$ ($\ell\in\{1,2\}$) is adjacent to some vertex in $V^j_{3-\ell}$, and thus it is easy to find a set $E^j_\ell$ of edges in $T^j$ such that $(V(D^j),E^j_\ell)$ is a forest of stars with centers in $V^j_\ell$ and leaves in $V^j_{3-\ell}$. Therefore, for $\ell = 1,2$, $F^\ell = \bigcup_{j=1}^t E^j_\ell$ are two sets of edges satisfying (i) and (ii). Furthermore, $\bigcup_{j=1}^t (V^j_1\cup V^j_2)=\bigcup_{j=1}^t V(D^j)=V(D)$. As a consequence, from Lemma~\ref{lm: revenue of a blue star}, (iii) is also satisfied.
\qed
\end{proof}

To compute $\hat F_i$, the algorithm does the following. Our algorithm uses the well-known $\fptas$ for the \emph{Knapsack Problem} to compute a $(1+\epsilon/(2h))$-approximate solution $S_i$ for the following instance of knapsack. For each $v_j$, consider the blue edge $e \in B_i$ incident to $v_j$ with cheapest activation cost. We create an object $o^i_j$ of profit $c_i(v_0,v_j)$ and volume $\gamma_i(e)$; we say that $e$ is associated with the object $o^i_j$. Finally, the volume of the knapsack is $\Delta$. Denote by $K_i$ the input instance of knapsack. The solution $S_i$ for $K_i$ identifies a set of blue edges, namely $B'=\{e \in B_i \mid \exists o^i_j \in S_i \mbox{\ s.t.\ } e \text{ is associated with } o^i_j\}$. The algorithm then uses the decomposition algorithm described in Lemma~\ref{lm: decomposition of cycle trees} to find two subsets of edges $F^1$ and $F^2$, and then it sets $\hat F_i$ to $F^1$ if $r(F^1)\geq r(F^2)$, and to $F^2$ otherwise.
The pseudocode of the algorithm is given in Algorithm~\ref{alg: knapsack like algorithm}.

\floatname{algorithm}{Algorithm}
\begin{algorithm}
\begin{algorithmic}[1]

\FOR{$i=1$ to $h$}

\STATE compute a $(1+\epsilon/(2h))$-approximate solution $S_i$ for the knapsack instance $K_i$

\STATE $B'=\{e \in B_i \mid o^i_j \in S_i, e \text{ is associated with } o^i_j\}$

\STATE compute $F^1$ and $F^2$ w.r.t. $B'$ as explained in Lemma~\ref{lm: decomposition of cycle trees}

\STATE{{\bf if} $r(F^1)\geq r(F^2)$ {\bf then } $\hat F_i:=F^1$ {\bf else } $\hat F_i:=F^2$ {\bf end if}}

\STATE $F_i:=\{e_{j,q}\mid \bar e_{j,q} \in \hat F_i\}$

\ENDFOR

\STATE {\bf return} the best of the $F_i$'s

\end{algorithmic}
\caption{}\label{alg: knapsack like algorithm}
\end{algorithm}

\begin{theorem}
Algorithm~\ref{alg: knapsack like algorithm} computes a $(2h+\epsilon)$-approximate solution in polynomial time for \stack$(\gamma,\Delta)$, for any constant $\epsilon >0$.
\end{theorem}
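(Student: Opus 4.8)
The plan is to pair the per-level upper bound of Lemma~\ref{lm: upper bound on the leader's revenue} with a matching per-level \emph{lower} bound on the revenue produced by the algorithm, where at each level we lose only a factor $2$ (from the bipartite decomposition) and an additional factor $h$ appears because the algorithm outputs a single best level. First I would record two elementary facts. Since $T_i$ is a star whose leaves are exactly the level-$i$ vertices $V_i$ and $c_i(v_0,v)=c(\bar v,v)$, we have $c(T_i)=\sum_{e\in E_i}c(e)$, and as the $E_i$ partition $E(T)$ this gives $\sum_{i=1}^{h}c(T_i)=c(T)$. Writing $\opt_i := c(T_i)-\sum_{H\in\comp(G^*_i)}\min_{v\in V(H)}c_i(v_0,v)$, Lemma~\ref{lm: upper bound on the leader's revenue} then reads $r^\ast=r(F^\ast)\le\sum_{i=1}^{h}\opt_i$.

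The heart of the argument is to show that for each level $i$ the optimum of the knapsack instance $K_i$ is at least $\opt_i$. I would prove this by exhibiting a feasible knapsack solution of profit \emph{exactly} $\opt_i$ built from $F^\ast_i$: for every component $H\in\comp(G^*_i)$ let $w_H=\arg\min_{v\in V(H)}c_i(v_0,v)$ (so $w_H=v_0$ for the component containing $v_0$), and select the objects $o^i_j$ corresponding to the vertices in $\bigcup_{H}\big(V(H)\setminus\{w_H\}\big)$; by the definition of $\opt_i$ their total profit is $\opt_i$. Feasibility is the delicate point and, I expect, the main obstacle: the knapsack pays for \emph{one} blue edge per selected vertex, while connecting a component $H$ inside $F^\ast_i$ needs $|V(H)|-1$ edges, so one must rule out that this mismatch violates the budget. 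I would resolve it by rooting a spanning tree of each $H$ at $w_H$, which injectively assigns to every selected $v\in V(H)\setminus\{w_H\}$ a distinct parent edge of $F^\ast_i$; since that parent edge is incident to $v$, the cheapest blue edge at $v$ (the volume of $o^i_j$) is at most $\gamma_i$ of the parent edge, so the total selected volume is at most $\sum_{e\in F^\ast_i}\gamma_i(e)$. Finally, because $e_{j,q}$ is the cheapest activating edge between $\hat T_j$ and $\hat T_q$, each $\bar e_{j,q}\in F^\ast_i$ can be charged to a distinct edge of $F^\ast$ of no smaller activation cost, whence $\sum_{e\in F^\ast_i}\gamma_i(e)\le\sum_{e\in F^\ast}\gamma(e)\le\Delta$.

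Once $\mathrm{Knap}_i\ge\opt_i$ is established the rest is bookkeeping. The $\fptas$ returns $S_i$ of profit at least $\opt_i/(1+\epsilon/(2h))$. Since each $o^i_j\in S_i$ is associated with an edge of $B'$ incident to $v_j$, the endvertex set $U$ of $B'$ contains every selected leaf, so $\sum_{v\in U}c_i(v_0,v)\ge\mathrm{profit}(S_i)\ge\opt_i/(1+\epsilon/(2h))$. Applying Lemma~\ref{lm: decomposition of cycle trees} to $B'$ produces $F^1,F^2\subseteq B'$, both forests of stars, with $r(F^1)+r(F^2)\ge\sum_{v\in U}c_i(v_0,v)$; the algorithm keeps the better one as $\hat F_i$, so $r(\hat F_i)\ge\tfrac12\sum_{v\in U}c_i(v_0,v)\ge\opt_i/\big(2(1+\epsilon/(2h))\big)$, and $\hat F_i\subseteq B'$ keeps the activation cost within $\Delta$. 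Lemma~\ref{lm: converted solution} then yields $r(F_i)\ge r(\hat F_i)$, and $F_i$ inherits the same (feasible) activation cost.

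I would then assemble the global bound. As the algorithm returns the best of the $h$ sets $F_i$, we have $r\ge\tfrac1h\sum_{i=1}^{h}r(F_i)\ge\frac{1}{2h\,(1+\epsilon/(2h))}\sum_{i=1}^{h}\opt_i\ge\frac{r^\ast}{2h+\epsilon}$, using $2h\,(1+\epsilon/(2h))=2h+\epsilon$ and the upper bound $r^\ast\le\sum_i\opt_i$; hence $r^\ast/r\le 2h+\epsilon$. Polynomial running time is immediate: there are $h\le n$ levels, and each contributes one call to the knapsack $\fptas$ and one invocation of the polynomial-time decomposition of Lemma~\ref{lm: decomposition of cycle trees}, with all optimal prices computable in polynomial time via Equation~(\ref{eq: optimal pricing for set of leader's edges}).
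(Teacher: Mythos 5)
Your proposal is correct and follows essentially the same route as the paper's proof: the same feasible knapsack solution $S^*_i$ built by rooting a spanning tree of each component of $G^*_i$ at its cheapest vertex, the same chain $r(F_i)\ge r(\hat F_i)\ge\tfrac12\,\texttt{profit}(S_i)$ via Lemmas~\ref{lm: converted solution} and~\ref{lm: decomposition of cycle trees}, and the same summation over levels against Lemma~\ref{lm: upper bound on the leader's revenue}. If anything, you spell out two steps the paper leaves implicit --- the injective charging of each $\bar e_{j,q}\in F^*_i$ to a distinct edge of $F^*$ of no smaller activation cost (which is what makes $S^*_i$ budget-feasible), and the inequality $\sum_{v\in U}c_i(v_0,v)\ge\texttt{profit}(S_i)$ linking the knapsack output to the decomposition lemma --- so no gap remains.
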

\begin{proof}
Remind that $G^*_i=( \{ v_0, \dots, v_{\ell_i} \}, F^*_i)$, where $F^*_i$ are obtained by mapping the edges of an optimal solution $F^*$ to the blue edges of the auxiliary instance $T_i$. In order to show a lower bound for the profit of the optimal solution of $K_i$, we define a feasible solution $S^*_i$ as follows: for each connected component $H$ of $G^*_i$, let $v\in V(H)$ be the vertex that minimizes $c(v_0, v)$, and consider any spanning tree $\mathcal{T}_H$ of $H$ rooted at $v$.
 Notice that for each $v_j \in V(H) \setminus \{v\}$ we have an object $o^i_j$ whose volume is at most $\gamma(\bar{v}_j, v_j)$, where $\bar{v}_j$ denotes the parent of $v_j$ in $\mathcal{T}_H$. Add such objects to the solution $S^*_i$.
	Once we have considered all the connected components of $G^*_i$, the solution $S^*_i$ contains a set of objects of total volume at most $\Delta$  (since the solution $F^*$ is feasible). Moreover, by construction, the profit of $S^*_i$ must be at least
\[
	{\tt profit}(S^*_i) \ge c(T_i) - \sum_{H \in \comp(G^*_i)} \min_{v \in V(H)} c_i(v_0, v) \mbox{.}
\]

We now bound the revenue $r(\hat{F}_i)$. Since, from Lemma \ref{lm: converted solution} and Lemma \ref{lm: decomposition of cycle trees}, $ r(F_i) \ge r(\hat{F}_i) \ge \frac{1}{2} \, \texttt{profit}(S_i)$, we have that
\begin{multline*}
	c(T_i) - \sum_{H \in \comp(G^*_i)} \min_{v \in V(H)} c_i(v_0, v) \le {\tt profit}(S^*_i)
	\\ \le \left( 1+\frac{\epsilon}{2h} \right) \texttt{profit}(S_i)
	\le \left( 2+\frac{\epsilon}{h} \right) r(\hat{F}_i) \le \left( 2+\frac{\epsilon}{h} \right) r(F_i) \mbox{.}
\end{multline*}

\noindent
By summing over all levels and using Lemma \ref{lm: upper bound on the leader's revenue}, we obtain
\begin{multline*}
	r(F^*) \le c(T) - \sum_{i=1}^{h} \sum_{H \in \comp(G^*_i)} \min_{v \in V(H)} c_i(v_0, v) \\
	\le \left( 2+\frac{\epsilon}{h} \right) \sum_{i=1}^{h} r(F_i)
	\le \left( 2h+\epsilon \right) \max_{i=1,\dots,h} r(F_i) \mbox{.}
\end{multline*}

\noindent
This completes the proof.
\qed
\end{proof}


\section{Open problems}
In this paper we have presented a collection of results concerning
some interesting variants of the classic $\stack$ game. Many
intriguing problems are left open. Among the others, we list the
following: (i) Is \stack$(0,0)$ \np-hard?
(ii) Can we design a better approximation algorithm for \stack$(0,0)$? (iii) Can we prove a stronger
inapproximability result for \stack$(\gamma,\Delta)$ than the one
holding for \stack?  (iv) What can we say about
\stack$(\gamma,\Delta)$ for instances with uniform activation costs? Is the problem \np-hard? Can we design and 
can we extend our results for \stack$(0,0)$? (v) Finally, and most importantly, does
$\stack$ admit a constant factor approximation algorithm?

\end{document}